\documentclass{article}

\usepackage{arxiv}

\RequirePackage{amsthm,amsmath,amsfonts,amssymb}
\RequirePackage[numbers]{natbib}
\usepackage{amsmath}
\usepackage{color, graphicx}
\usepackage[colorinlistoftodos]{todonotes}
\usepackage{xcolor}
\usepackage{verbatim}
\usepackage{natbib}
\usepackage{color,soul}
\usepackage{amsthm}
\usepackage{amsmath}
\usepackage{url} 
\usepackage{tikz}
\usepackage{etoolbox}
\usepackage{float}
\usepackage{amsmath}
\usepackage{float}
\usepackage{siunitx}
\usepackage{amssymb}
\usepackage{scalerel}
\usepackage{comment}
\usepackage{graphicx}
\usepackage[parfill]{parskip}
\usepackage{marvosym}
\usepackage{wrapfig}
\usepackage{resizegather}
\usepackage{adjustbox}
\usepackage[toc,page]{appendix}
\usepackage{bbm}
\usepackage{graphicx}
\usepackage{amsbsy}
\usepackage{lipsum}
\usepackage{multicol}
\usepackage{amsthm}
\usepackage{pifont}
\usepackage{bbm}
\usepackage{bm}
\usepackage{dsfont}
\usepackage{bigints}
\usepackage{mathtools}
\usepackage{graphicx}
\usepackage{svg}
\usepackage{calc}
\usepackage{accents}
\usepackage{mwe}
\usepackage{amsmath}
\usepackage{amsthm}
\usepackage{enumitem}
\usepackage{graphicx,wrapfig,lipsum}
\usepackage{mathtools}
\usepackage{mathrsfs}
\usetikzlibrary{mindmap,backgrounds}
\usepackage{blindtext}
\usepackage[hidelinks]{hyperref} % First, ensure no hidden links
\hypersetup{
    colorlinks=false,   % Disable coloring of link text
    pdfborder={0 0 1},  % Creates a visible box around links
    linkbordercolor={1 0 0}, % Red box for internal links
    citebordercolor={0 1 0}, % Green box for citations
    urlbordercolor={0.25 1 1}  % Blue box for URLs
}

\usepackage{graphicx}
\usepackage{subcaption} % For subfigures
\usepackage[numbers]{natbib}
\allowdisplaybreaks

\RequirePackage{graphicx}%% uncomment this for including figures

\usepackage[utf8]{inputenc}
\usepackage{amsmath,amsfonts,amssymb,amsthm}

\usepackage{enumitem}
\usepackage{letltxmacro}
\usepackage{nameref,hyperref}
\usepackage[capitalize]{cleveref}
\usepackage{indentfirst}

\numberwithin{equation}{section}

\newcommand{\indep}{\perp \!\!\! \perp}

\newcommand{\oset}[3][0ex]{%
  \mathrel{\mathop{#3}\limits^{
    \vbox to#1{\kern-2\ex@
    \hbox{$\scriptstyle#2$}\vss}}}}

\usepackage{enumitem}
\usepackage{letltxmacro}
\usepackage{nameref,hyperref}
\usepackage[capitalize]{cleveref}

\newlist{thmlist}{enumerate}{1}
\setlist[thmlist]{label=\alph{thmlisti}., ref=\thetheorem.\alph{thmlisti},noitemsep}

\newcommand{\pb}[0]{\mathrm{P}}
\newcommand{\ex}[0]{\mathrm{E}}
\newcommand{\ind}[0]{\mathbf{1}}

\DeclareMathOperator*{\argmin}{arg\,min}

\newcommand\numberthis{\addtocounter{equation}{1}\tag{\theequation}}
\theoremstyle{plain}
\newtheorem{theorem}{Theorem}%[section]
\newtheorem{corollary}{Corollary}
\newtheorem{lemma}{Lemma}

\theoremstyle{remark}

\newtheorem{condition}{Condition}

\newtheorem*{definition*}{Definition}

\title{Semiparametric Uncertainty Quantification via Isotonized Posterior for  Deconvolutions}
\newcommand{\university}{\textit{Delft University of Technology, Mekelweg 4, Delft 2628CD, The Netherlands.}}
\newcommand{\authname}[1]{\textbf{#1}}
\author{
    \authname{Francesco Gili} \thanks{ \texttt{F.Gili@tudelft.nl}} \quad \quad \quad \authname{Geurt Jongbloed} \thanks{\texttt{G.Jongbloed@tudelft.nl}} \\
    \\
    \university % The university is placed below the authors
}
\hypersetup{
pdftitle={Semiparametric Uncertainty Quantification via Isotonized Posterior for Deconvolutions - F. Gili, G. Jongbloed, A. van der Vaart},
pdfsubject={},
pdfauthor={Francesco ~Gili, Geurt ~Jongbloed},
pdfkeywords={Deconvolution problems, Bayesian Isotonic estimation, Projection Posterior, Dirichlet Process},
}

\begin{document}
\maketitle

\begin{abstract}
	We address the problem of uncertainty quantification for the deconvolution model \(Z = X + Y\), where \(X\) and \(Y\) are nonnegative random variables and the goal is to estimate the signal's distribution of \(X \sim F_0\) supported on~\([0,\infty)\), from observations where the noise distribution is known. Existing frequentist methods often produce confidence intervals for $F_0(x)$ that depend on unknown nuisance parameters, such as the density of \(X\) and its derivative, which are difficult to estimate in practice. This paper introduces a novel and computationally efficient nonparametric Bayesian approach, based on projecting the posterior, to overcome this limitation. Our method leverages the solution \(p\) to a specific Volterra integral equation as in \cite{74}, which relates the cumulative distribution function (CDF) of the signal, \(F_0\), to the distribution of the observables. We place a Dirichlet Process prior directly on the distribution of the observed data $Z$, yielding a simple, conjugate posterior. To ensure the resulting estimates for \(F_0\) are valid CDFs, we isotonize posterior draws taking the Greatest Convex Majorant of the primitive of the posterior draws and defining what we term the Isotonic Inverse Posterior. We show that this framework yields posterior credible sets for \(F_0\) that are not only computationally fast to generate but also possess asymptotically correct frequentist coverage after a straightforward recalibration technique for the so-called Bayes Chernoff distribution introduced in \cite{54}. Our approach thus does not require the estimation of nuisance parameters to deliver uncertainty quantification for the parameter of interest $F_0(x)$. The practical effectiveness and robustness of the method are demonstrated through a simulation study with various noise distributions for $Y$.
\end{abstract}

% keywords can be removed
\keywords{Deconvolution \and Isotonic estimation \and Efficiency theory \and Uncertainty Quantification}
%\newline
%\newline
%\textbf{MSC2010 subject classification:} 62G05, 62G20, 62C20, 62E20}
\section{Introduction}

In many statistical applications, one encounters situations where the observed data $Z_i$ arises from the sum of two independent components: a variable of interest $X_i$, which is usually the quantity we are interested in, and another variable $Y_i$ whose distribution is assumed to be known or to be inferred using expert knowledge. 

This setup applies to many different areas of research. For example, $X_i$ might represent the true underlying quantity we aim to measure in an experiment or measurement, while $Y_i$ captures the random measurement error introduced during the process --- and which is assumed to be additive in nature. In epidemiology, $X_i$ could correspond to the actual time of infection, and $Y_i$ to the incubation period. This framework plays a key role in approaches like back-calculation, which has been used in HIV/AIDS research to estimate historical infection rates based on observed diagnoses \cite{61}. The statistical properties of $Y_i$ vary significantly depending on the context: measurement errors are often modeled with symmetric distributions over the entire real line, such as the normal distribution, while durations --- like incubation periods or the time until a component fails --- are typically modeled with skewed, non-negative distributions, such as the Gamma or Weibull distributions.

This method of separating the distribution of $X_i$ from the one of $Y_i$, which is called deconvolution, is also key in other fields. For example, in image processing, a blurry image $Z$ can be thought of as the true, sharp image $X$ distorted by a blur effect $Y$ caused by the camera or optics. Methods like the Richardson-Lucy algorithm are used to try and recover the original sharp image $X$ \cite{68, 67}. Similarly, in astronomical spectroscopy, the spectrum of light observed from a star or galaxy $Z$ is the object's actual spectrum $X$ that has been spread out or blurred by the telescope's instruments represented by additive noise $Y$. Deconvolution helps scientists recover the true details about the celestial object \cite{107}. Another application is in seismology, where the seismic waves recorded from underground $Z$ are a combination of the Earth's rock layer structure $X$ and the seismic signal $Y$ that was sent out or generated. Deconvolution is thus a fundamental step to get clearer images of the Earth's subsurface \cite{109}.

More formally, throughout this paper we confine attention to the case where both $X$ and $Y$ are nonneg\-ative random variables, i.e.\ we work on the half-line $[0,\infty)$. Let $X_1, X_2, \dots, X_n$ denote a sample from an unknown distribution with distribution function $F_0$ supported on $[0, \infty)$ and, independent of that sample, $Y_1, Y_2, \dots, Y_n$ a sample from a known distribution with density $k$ on $[0, \infty)$. Consider the problem of estimating $F_0$ based on the sample $Z_1, Z_2, \dots, Z_n$, where $Z_i = X_i + Y_i$. The density $g_0$ of $Z_i$ is the convolution of $k$ and $F_0$ in the following sense:
\[
g_0(z) = \int_{0}^{\infty} k(z - x) \, dF_0(x) =: k \ast dF_0(z).
\]
For this reason, this estimation problem is known as a deconvolution problem.

Suppose that, given a kernel $k$, we have a function $p$ living on $[0, \infty)$, solving the integral equation
\begin{equation} \label{eq:1}
(p \ast k)(x) := \int_0^x p(x-y)k(y) \, dy = (\ind{} \ast \ind{})(x) = x\ind{}(x),
\end{equation}
where the function $\ind{}$ is defined by
\[
\ind{}(x) = \ind_{[0, \infty)}(x).
\]
Then we can write, for each $x > 0$ and a $Z$ having density function $g_0 = k \ast dF_0$,
\begin{equation} \label{eq:2}
\ex[p(x-Z)] = (p \ast g_0)(x) = (p \ast k \ast dF_0)(x) = (\ind{} \ast \ind{} \ast dF_0)(x) = \int_0^x F_0(s) \, ds =: H_0 (x).
\end{equation}
A crucial question concerning integral equation \eqref{eq:1} is whether a function $p$ exists that satisfies it. In Appendix \ref{sec: review integral equation}, we give a review of the literature on this topic.

We start by placing a Dirichlet Process prior directly on the distribution function of the observables: \( G \sim \operatorname{DP}(\alpha) \). Given an observed sample \( Z_1, \dots, Z_n \) from cdf \( G_0 \) (associated with density $g_0$), the posterior distribution is (c.f.\ \cite{59}, Theorem 4.6 in \cite{4}):
\begin{align}\label{eq: dirichlet posterior}
	G \mid Z_1,\ldots,Z_n \sim \text{DP}(\alpha + n \mathbb{G}_n), \quad \quad  \mathbb{G}_n := \frac{1}{n} \sum_{i=1}^n \delta_{Z_i}.
\end{align}
This choice offers computational simplicity due to the conjugacy of the posterior. Leveraging the inverse relations inherent to the problem, we define, by analogy with \eqref{eq:2}, the functional $H_{\scriptscriptstyle{G}}$ (with $G$ in place of $G_0$) as
\begin{align}\label{eq: naive bayes}
	H_{\scriptscriptstyle{G}}(x)=\int_{[0, x)} p(x-z) \, d G(z).
\end{align}
Similarly, the empirical counterpart of the left-hand side of \eqref{eq:2} is given by a sample mean:
$$
H_n(x)=\int_{[0, x)} p(x-z) \, d \mathbb{G}_n(z)=\frac{1}{n} \sum_{i=1}^n p\left(x-Z_i\right)
$$
This function $H_n$ is an estimator for $H_0$. 

Taking the derivative of some smoothed version of $H_n$ or $H_{\scriptscriptstyle{G}}$ ($H_n$ and $H_{\scriptscriptstyle{G}}$ will in general not be differentiable) would therefore yield methodologies for the estimation of $F_0$. We call such a methodology an inverse methodology, since it is based on the inverse relation
$$
F_0(x)=\frac{d}{d x} p * g_0(x) \quad \text { a.e. }
$$
which follows from \eqref{eq:2}. However, by using general smoothing techniques, for example kernel estimation, we would not use the information that $H_0$ is convex (which follows from \eqref{eq:2} and the monotonicity of $F_0$). Consequently, inverse estimators based on standard smoothing techniques will in general yield non-monotone estimators. 

To remedy this, we isotonize the posterior draws using the following procedure. For $T \in (0, \infty]$ and a continuous function $h : [0,\infty) \to [0,\infty)$, the Greatest Convex Minorant (GCM) of $h$ on $[0,T]$ and its right-hand side derivative, are given for $x \in [0,T]$ by:
\begin{align*}
    &h^{\star}(x) := \sup {\{ f(x) \, : \, f \leq h \, \, \text{on} \, \, [0,T], \, f \, \text{convex} \, \, \text{on} \, \, [0,T] \}},\\
    & (h^{\star})^{\prime}_+ (x):= \lim_{t \to 0^+} \frac{h^{\star}(x + t) - h^{\star}(x)}{t}.
\end{align*}
For any $x \in [0, T]$, we define the isotonizations of the draws of the posterior as the right derivative of $H_{\scriptscriptstyle{G}}^{\star}$ evaluated at $x$,
\begin{align}\label{eq: def IIP}
	\hat{F}_{\scriptscriptstyle{G}}(x) = (H_{\scriptscriptstyle{G}}^{\star})^{\prime}_+ (x).
\end{align}
The posterior draws $\hat{F}_{\scriptscriptstyle{G}}$ are now by construction monotone (isotonic with respect to natural ordering on $\mathbb{R}$), and therefore we call the posterior \textit{Isotonic Inverse Posterior}. Using the same procedure we define the \textit{Isotonic Inverse Estimator} of \cite{74}, for any $x \in [0, T]$, as:
\begin{align}\label{eq: def IIE}
	\hat{F}_n(x) = (H_n^{\star})^{\prime}_+ (x).
\end{align}
One possible choice for $T$ in the definition of $\hat{F}_{\scriptscriptstyle{G}}$ or $\hat{F}_n$ is $T=\infty$. As we will see in Section \ref{sec: main results}, we need finiteness of $T$ in order to prove our asymptotic distribution result for a large class of densities $k$. If $k$ is assumed monotone on $[0,\infty)$, our result does allow the choice $T=\infty$. For practical purposes, there is no difference between finite (but large) and infinite $T$. 

In the domain of nonparametric inference, frequentist methods often yield estimators with well-characterized asymptotic distributions (e.g., the Chernoff distribution for isotonic regression, c.f. \cite{2}). These distributions are crucial for constructing asymptotic confidence intervals and understanding the estimator's behavior. However, a significant practical limitation arises because these limiting distributions generally depend on unknown nuisance parameters. For instance, in our deconvolution problem of estimating $F_0(x)$ from samples $Z_1, \dots, Z_n \sim g_0 = k * dF_0$, the limiting distribution depends on quantities such as $g_0(x)$ and, more critically, $f_0(x)$ (in a similar way as in the isotonic regression setting where the limit distribution often involves the derivative of the true underlying function). The necessity of estimating these nuisance parameters complicates the direct application of these asymptotic results for uncertainty quantification. This practical difficulty highlights the desirability of methods that can provide uncertainty quantification without relying on the estimation of such nuisance parameters. Bayesian nonparametric procedures offer a promising alternative in this regard. As explored in approaches in the ``projection-posterior" literature (c.f. \cite{54,55,56}), a key advantage is that the resulting credible intervals can achieve asymptotic frequentist coverage that is free of these nuisance parameters. The limiting coverage in such Bayesian frameworks can depend only on the chosen credibility level, not on characteristics of the true unknown function like its derivative $f_0(x)$. This circumvents the harder problem of nuisance parameter estimation, leading to more direct and more robust inferential procedures for quantifying uncertainty in nonparametric models. Furthermore, another possible Bayesian approach would consist in placing a DP process prior directly on $F$. However this procedure has some clear computational disadvantages due to the needed Gibbs sampling procedures to sample from the posterior.

Thus the approach in this paper has two advantages:
\begin{itemize}
	\item[(i)] \textbf{Computational speed}: The Dirichlet posterior \eqref{eq: dirichlet posterior} is explicit and its projection can be effectively computed by the Pool-Adjacent-Violators Algorithm (PAVA, c.f.\ \cite{61}).  
	\item[(ii)] \textbf{Uncertainty quantification}: The Bayesian framework inherently provides uncertainty quantification without the need to estimate nuisance parameters. We show below that this is asymptotically correct in the frequentist sense, after an appropriate calibration.
\end{itemize}
Advantage (ii) is the core contribution of this paper, as currently there are no nuisance-free methods in the literature for semiparametric uncertainty quantification of $F_0(x)$ for deconvolution problems as treated in this paper. Projecting the unconstrained posterior distribution departs from a classical Bayesian framework which might assign a Dirichlet Process (DP) prior to \( F \) and employ a hierarchical model to sample from the density of the observations. Uncertainty quantification results for this approach are currently unavailable and, at the same time, this approach is not advantageous from a computational prespective, being slower as described above.

\subsubsection*{Motivation and connections with the literature} 

There is a vast literature on deconvolution problems, especially in the frequentist nonparametric literature, where the nonparametric maximum likelihood estimator (NPMLE) is a central focus. For the special case where the kernel $k$ is a decreasing density on $[0, \infty)$ and $F_0(0) = 0$, \cite{71} study the NPMLE for $F_0$, demonstrating its consistency and conjecturing its asymptotic distribution. They also introduce the broader problem of one-sided errors in convolution models, deriving NPMLEs for the c.d.f. in this context. While these NPMLEs often lack explicit expressions and require iterative computation, some particular cases yield more direct solutions. For instance, uniform deconvolution is addressed by \cite{98, 99}, and more recently by \cite{75} who propose kernel-based estimators and inversion formulas. Exponential deconvolution is tackled by \cite{104, 73}. For situations lacking such explicit forms, \cite{74} circumvent this by proposing an isotonic inverse estimator. Many of these maximum likelihood estimators, and the works surrounding them, share cube root asymptotics similar to the Grenander estimator for decreasing densities and NPMLEs in current status censoring problems (c.f. \cite{2}), with a general focus on deriving the asymptotic distributions of these monotonic estimators. An alternative approach to deconvolution utilizes the convolution structure for inversion via Fourier transform techniques. Kernel estimators based on this methodology are introduced by several authors (c.f. \cite{84} and \cite{101}), using direct inversion formulas for gamma and Laplace deconvolution are found in \cite{102}. While these kernel-based methods can achieve faster convergence rates if the unknown $F_0$ is smooth, a notable disadvantage is that their resulting estimators for $F_0$ are not inherently monotone, unlike the NPMLE and isotonic inverse estimators.

Within the broader field of deconvolution, a lot of work is done to address the estimation of a signal distribution $f$ observed with additive, known-distribution noise. This noise generally is conceptualized as centered, differentiating it from one-sided error scenarios. Foundational studies on convergence rates and optimality for kernel-based estimators include those by \cite{81}, \cite{82}, \cite{83}, \cite{84}, and \cite{85}. Establishing sharp asymptotic optimality further is done by \cite{86} and \cite{87,88}. Adaptive methodologies, particularly for bandwidth selection in models with known error distributions, also are extensively developed. Some examples include wavelet-based methods \cite{89}, direct bandwidth selection techniques \cite{90}, and penalized projection strategies \cite{91}, with comprehensive treatments available in works like \cite{92}. The estimation of the cumulative distribution function (c.d.f.) in convolution models also receives attention. Contributions from \cite{93}, \cite{84}, \cite{94}, \cite{95}, \cite{96}, and \cite{97} primarily focus on pointwise estimation procedures, a consequence of the c.d.f. not being square-integrable on $\mathbb{R}$. The latter two studies address c.d.f. estimation with unknown error distributions, contingent on specific tail decay properties of the error's characteristic function, and achieve optimal rates for Sobolev-class target functions. It is worth noting that the deconvolution of non-negative variables also emerges in applied contexts such as actuarial and insurance modeling. Finally, some financial applications see contributions from \cite{79} and \cite{80} addressing one-sided errors, focusing on optimal adaptive estimation in non-parametric regression where errors (often exponential) are not assumed to be centered.

A similar methodology to the one proposed in this paper — using an unconstrained conjugate prior for the distribution function of the observables and projecting the posterior samples into the space of monotone functions — very recently is adopted in isotonic regression settings (c.f.\ \cite{54,55,56}), isotonic density estimation (c.f.\ \cite{60}), ODEs settings (c.f.\ \cite{57,58}) and for Wicksell's problem (\cite{50},\cite{65}). \cite{57,58} develop Bayesian two-step methods for parameter inference in ODE models by embedding them in a nonparametric regression framework with B-spline priors. The parameter posterior is obtained by minimizing a discrepancy between the estimated regression function and the ODE solution, either via derivative matching or Runge–Kutta approximations. \cite{58} further establishes a Bernstein–von Mises theorem for the Runge–Kutta-based method, showing that the resulting Bayes estimator is asymptotically efficient. However, their framework relies on a very different type of prior modelling and the projection method does not entail isotonization. In \cite{54}, the authors study a nonparametric Bayesian regression model for a response variable \( Y \) with respect to a predictor variable \( X \in [0,1] \), given by \( Y = f(X) + \varepsilon \), where \( f \) is a monotone increasing function and \( \varepsilon \) is a mean-zero random error. They propose a "projection-posterior" estimator \( f^* \), constructed using a finite random series of step functions with normal basis coefficients as a prior for \( f \). Their main finding is that when \( f^* \) is centered at the maximum likelihood estimator (MLE) and rescaled appropriately, the Bernstein-von Mises (BvM) theorem does not hold. However, when \( f^* \) is instead centered at the true function \( f_0 \), a BvM result is obtained. This phenomenon resonates with the known inconsistency of the bootstrap for the Grenander estimator in isotonic regression (c.f.\ \cite{63}) and represents a key similarity with the present work.

One of the key advantages our approach shares with the works in \cite{57,58,54,55,56} is the computational efficiency, driven by posterior conjugacy, as well as the relative simplicity of the asymptotic analysis.

\section{Main results}\label{sec: main results}

In this section, we present the main result of this paper, which establishes the asymptotic distribution of the isotonized posterior of $\hat{F}_{\scriptscriptstyle{G}}$, and we compare it with the asymptotic distribution attained by the isotonized inverse estimator (IIE) $\hat{F}_n$ (c.f.\ \cite{74}) for a fixed $x \in [0, T)$, where $T$ is a positive constant. We show in particular an interesting analogy. The results of \cite{74} show that the IIE $\hat{F}_n$
attains at cube $n$ root rate an asymptotic distribution, which, up to constants, is the Chernoff distribution. Symmetrically, we show that under the same conditions, the isotonized posterior of $\hat{F}_{\scriptscriptstyle{G}}$ attains at cube root rate an asymptotic distribution which, up to constants, is the Bayesian Chernoff distribution \eqref{eq:ZB_def}, as introduced in \cite{54}. This curious phenomenon both confirms and extends the vision of \cite{54}, who introduced the Bayesian Chernoff distribution \eqref{eq:ZB_def}, proved its key properties (symmetry, monotone CDF, and the resulting recalibration procedure), and predicted that this distribution would play a central role in uncertainty quantification in nonparametric Bayesian inference far beyond the regression setting they originally considered. Indeed, also in the current setting, with a very different inverse problem and different prior, the Bayesian Chernoff distribution arises as the limiting distribution of the isotonized posterior of $\hat{F}_{\scriptscriptstyle{G}}$, confirming that the concept and the framework of \cite{54} have a broad reach.

Integral equation \eqref{eq:1} is a Volterra equation of the first kind, of convolution type. The function $p$ is sometimes called the \textit{resolvent of the first kind} of $k$ (see \cite{103}, page 158). In all what follows, we restrict our attention to the case where $F_0$ has support contained in $[0, \infty)$. Furthermore, to prove consistency of the posterior $\hat{F}_{\scriptscriptstyle{G}}$, we have to impose a condition on $p$. The asymptotic distribution is derived under more stringent conditions (the same as in \cite{74}) that the kernel $k$ satisfies certain regularity conditions, which we will specify below.

\begin{condition} \label{cond:1}
On bounded intervals, the function $p$ has only finitely many discontinuities. All these discontinuities are finite in size.
\end{condition}

For the cube root asymptotics of $\hat{F}_{\scriptscriptstyle{G}}(x)$ for $x < T$, we need a slightly stronger condition on $p$

\begin{condition} \label{cond:2}
The function $p$ is Hölder continuous of order $\gamma > 1/2$ on $(0, \infty)$ and $0 < p(0) < \infty$ and right continuous at zero.
\end{condition}

A crucial question concerning integral equation \eqref{eq:1} is whether a function $p$ exists that satisfies it. In Appendix \ref{sec: review integral equation}, we give a review of the literature on this topic, showing that such a function $p$ exists for many kernels $k$ of interest. We also report a lemma, Lemma \ref{lemma:1}, that provides a sufficient condition for the existence of such a function $p$ that satisfies the main condition required by the results in \cite{74} and thus by our results, Condition \ref{cond:2}.

For convenience of the reader, we report here without proof the main result of \cite{74} for the IIE $\hat{F}_n$ (c.f.\ Theorem 2 in \cite{74}).

\begin{theorem}[Theorem 2 in \cite{74}]\label{thm: frequentist UQ}
Let $p$ satisfy Condition \ref{cond:2}, $0 < T < \infty$ and let $x \in [0, T)$ be fixed and $F_0$ be such that $F_0$ has a continuous strictly positive derivative $f_0$ in a neighborhood of $x$. Then $n^{1/3}(\hat{F}_n(x) - F_0(x))$ converges in distribution as $n \to \infty$; specifically, $\forall \, z \in \mathbb{R}$:
\begin{align}
&\pb \left( n^{1/3} (\hat{F}_n(x) - F_0(x)) \le z \right) \xrightarrow{d} \pb \left( \frac{2^{2/3} f_0(x)^{1/3} g_0(x)^{1/3}}{k(0)^{2/3}} \argmin_{t \in \mathbb{R}} \{\mathbb{W}_1(t) + t^2 \} \le z \right)
\end{align}
where $\mathbb{W}_1$ is two-sided Brownian motion originating from $0$.
\end{theorem}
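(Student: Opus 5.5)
The plan is the classical switching-relation argument for slope-of-convex-minorant estimators, as for the Grenander estimator. The key observation is that $H_n(x)=\frac1n\sum_i p(x-Z_i)$ is an unbiased estimator of $H_0(x)=\int_0^x F_0$ by \eqref{eq:2}, and $H_0$ is convex with $H_0''=f_0>0$ near $x$. For $a$ in a neighbourhood of $F_0(x)$ define
\[
U_n(a)=\argmin_{s\in[0,T]}\{H_n(s)-as\}.
\]
Then $\hat F_n(x)> a$ if and only if $U_n(a)<x$, up to the usual convention on ties. Hence
\[
\pb\bigl(n^{1/3}(\hat F_n(x)-F_0(x))\le z\bigr)=\pb\bigl(U_n(F_0(x)+n^{-1/3}z)\ge x\bigr),
\]
and it suffices to find the limit law of $n^{1/3}\bigl(U_n(F_0(x)+n^{-1/3}z)-x\bigr)$.

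Substituting $s=x+n^{-1/3}t$, this rescaled quantity is the argmin over $t$ of the process
\[
Z_n(t)=n^{2/3}\bigl[H_n(x+n^{-1/3}t)-H_n(x)-(F_0(x)+n^{-1/3}z)\,n^{-1/3}t\bigr].
\]
I split $H_n=H_0+(H_n-H_0)$. The deterministic part gives $n^{2/3}\bigl[H_0(x+n^{-1/3}t)-H_0(x)-F_0(x)n^{-1/3}t\bigr]-zt\to \tfrac12 f_0(x)t^2-zt$ uniformly on compacts. The stochastic part is
\[
W_n(t)=n^{2/3}(\mathbb P_n-P)\bigl[p(x+n^{-1/3}t-Z)-p(x-Z)\bigr].
\]
Condition~\ref{cond:2} says $p$ jumps from $0$ to $p(0)$ at the origin and is Hölder of order $\gamma>1/2$ elsewhere. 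The increment $p(x+h-Z)-p(x-Z)$ therefore equals $p(0)\ind\{x<Z\le x+h\}$, up to sign convention, plus a remainder of size $O(|h|^\gamma)$. Its variance is $p(0)^2g_0(x)|h|+O(|h|^{2\gamma}+|h|^{1+\gamma})$, and $2\gamma>1$ makes the remainder negligible. Since $k(0)p(0)=1$ (differentiate \eqref{eq:1} twice formally, or use the known value of the resolvent), the variance is $g_0(x)|t|/(k(0)^2 n^{1/3})$ after rescaling. The $n^{2/3}$ scaling thus yields covariance $g_0(x)k(0)^{-2}\min(s,t)$ for same-sign arguments. Via a Lindeberg argument for finite-dimensional distributions plus a bracketing/VC entropy bound for tightness, using the Hölder class together with the indicator class, $W_n\Rightarrow \sigma\mathbb W_1$ with $\sigma^2=g_0(x)/k(0)^2$, in $\ell^\infty([-K,K])$.

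The main obstacle is transferring this convergence to the argmin, which needs an argmax continuous mapping theorem. That in turn requires showing $n^{1/3}(U_n(a_n)-x)=O_p(1)$ uniformly. I would prove this with the standard rate theorem (van der Vaart–Wellner 3.2.5). This needs the bound $E\sup_{|s-x|\le\delta}|(\mathbb P_n-P)(p(s-Z)-p(x-Z))|\lesssim \delta^{1/2}n^{-1/2}$, which follows from the modulus computation above via a maximal inequality; that is exactly where Condition~\ref{cond:2} and finiteness of $T$ enter, keeping the function class bounded. It also needs the quadratic drift bound from $f_0>0$. One must further handle the boundary when $x$ is near $0$, and the fact that $H_n$ need not be continuous; both are harmless because the argmin is restricted to $[0,T]$ and the jumps vanish at the $n^{-1/3}$ scale.

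Granting this, the argmin of $\sigma\mathbb W_1(t)+\tfrac12 f_0(x)t^2-zt$ is a.s. unique. Completing the square and using Brownian scaling, this argmin equals $z/f_0(x)+(4\sigma^2/f_0(x)^2)^{1/3}\argmin_t\{\mathbb W_1(t)+t^2\}$ in law. Using the symmetry of the Chernoff variable, $\pb(U\ge x)$ becomes $\pb\bigl((2\sigma)^{2/3}f_0(x)^{1/3}\,\argmin_t\{\mathbb W_1(t)+t^2\}\le z\bigr)$. Since $(2\sigma)^{2/3}f_0^{1/3}=2^{2/3}f_0(x)^{1/3}g_0(x)^{1/3}k(0)^{-2/3}$, this is the stated limit.
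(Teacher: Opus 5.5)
Your proposal is correct and follows essentially the same route as the paper: the paper cites this result from \cite{74} without proof, but the proof of Theorem~\ref{thm: bayesian UQ} contains its frequentist core, namely the switching relation, localization at scale $n^{-1/3}$, the split $p=p(0)\ind_{[0,\infty)}+\tilde p$ with the $\gamma$-H\"older remainder negligible because $\gamma>1/2$, the Taylor drift $\tfrac12 f_0(x)t^2$, a rate bound via Theorem 3.2.5 of \cite{5} (your modulus $\sqrt{\delta}$, coming from the jump of $p$ at $0$, is the right one), the argmin continuous mapping theorem, and Brownian rescaling. Two small touch-ups: Theorem 3.2.5 presupposes consistency of $U_n(a_n)$, which you should state (it follows from $\sup_{[0,T]}|H_n-H_0|\to 0$ in probability and the unique, well-separated minimum of $H_0(s)-F_0(x)s$ at $x$), and $p(0)k(0)=1$ comes from differentiating \eqref{eq:1} once at $0$, not twice.
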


Finally, we proceed to the main result of this paper, which establishes the asymptotic distribution of the isotonized posterior of $\hat{F}_{\scriptscriptstyle{G}}$. By stating it next to Theorem \ref{thm: frequentist UQ}, we highlight the analogy between the asymptotic distribution of the IIP and that of the IIE.

\begin{theorem}\label{thm: bayesian UQ}
Suppose $G \sim \mathrm{DP}(\alpha)$, where the base measure $\alpha$ has bounded density, and $\hat{F}_{\scriptscriptstyle{G}}$ as in \eqref{eq: def IIP}. Let $p$ satisfy Condition \ref{cond:2}, $0 < T < \infty$ and let $x \in [0, T)$ be fixed and $F_0$ be such that $F_0$ has a continuous strictly positive derivative $f_0$ in a neighborhood of $x$. Then $n^{1/3}(\hat{F}_{\scriptscriptstyle{G}}(x) - F_0(x))$ converges in distribution conditionally on $Z_1, \dots, Z_n$ as $n \to \infty$; specifically, $\forall \, z \in \mathbb{R}$:
\begin{align}
&\Pi \left( n^{1/3} (\hat{F}_{\scriptscriptstyle{G}}(x) - F_0(x)) \le z \mid Z_1, \dots, Z_n \right) \\
&\xrightarrow[n \to \infty]{d}  \pb \left( \frac{2^{2/3} f_0(x)^{1/3} g_0(x)^{1/3}}{k(0)^{2/3}} \argmin_{t \in \mathbb{R}} \{\mathbb{W}_1(t) + \mathbb{W}_2(t) + t^2 \} \le z \mid \mathbb{W}_1 \right),
\end{align}
where $\mathbb{W}_1$ and $\mathbb{W}_2$ are two independent Brownian motions on $\mathbb{R}$ originating at $0$.
\end{theorem}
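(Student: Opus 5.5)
We would follow the switching-relation argument used for the IIE in \cite{74}, applied conditionally on the data. For $a\in\mathbb{R}$, a standard property of the right derivative of the greatest convex minorant gives
\[
\hat F_{\scriptscriptstyle G}(x) > a \iff \operatorname*{argmin}_{s\in[0,T]}\{H_{\scriptscriptstyle G}(s)-as\} < x ,
\]
where $\operatorname*{argmin}$ denotes the largest minimiser. Set $a=F_0(x)+n^{-1/3}z$ and localise with $s=x+n^{-1/3}t$. Then the event $\{n^{1/3}(\hat F_{\scriptscriptstyle G}(x)-F_0(x))\le z\}$ becomes $\{\hat t_n\ge 0\}$, where $\hat t_n$ is the argmin over $t$ of
\[
\mathbb{Z}_n(t)=n^{2/3}\bigl[H_{\scriptscriptstyle G}(x+n^{-1/3}t)-H_{\scriptscriptstyle G}(x)-(F_0(x)+n^{-1/3}z)n^{-1/3}t\bigr].
\]
We then decompose $H_{\scriptscriptstyle G}-H_0=(H_{\scriptscriptstyle G}-H_n)+(H_n-H_0)$.

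\textbf{Deterministic and frequentist parts.} By \eqref{eq:2} and a Taylor expansion, $n^{2/3}[H_0(x+n^{-1/3}t)-H_0(x)-F_0(x)n^{-1/3}t]\to \tfrac12 f_0(x)t^2$. The empirical increment $n^{2/3}[(H_n-H_0)(x+n^{-1/3}t)-(H_n-H_0)(x)]$ is exactly the process handled in \cite{74}. Under Condition \ref{cond:2}, the jump of $p$ at $0$ of size $p(0)$ dominates, because the Hölder part with $\gamma>1/2$ contributes negligibly. The local variance is then $p(0)^2 g_0(x)|t|$, and $p(0)=1/k(0)$ from \eqref{eq:1}. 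Hence this increment converges to $\sqrt{g_0(x)}\,k(0)^{-1}\mathbb{W}_1(t)$ uniformly on compacts.

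\textbf{Posterior part.} We represent the posterior $\mathrm{DP}(\alpha+n\mathbb{G}_n)$ as $G=V_nQ+(1-V_n)\sum_i W_i\delta_{Z_i}$. Here $V_n\sim\mathrm{Beta}(|\alpha|,n)$, $Q\sim\alpha/|\alpha|$, and $W\sim\mathrm{Dir}(n;1,\dots,1)$, which equals $(E_i/\sum_j E_j)$ for i.i.d.\ standard exponentials $E_i$. The $\alpha$-component is $O_P(n^{-1})$. Because $\alpha$ has a bounded density and $p$ is bounded on $[0,T]$, its local increment is $O_P(n^{-1}\cdot n^{-1/3})$, which is negligible after multiplying by $n^{2/3}$. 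The remaining term $n^{2/3}[(H_{\scriptscriptstyle G}-H_n)(x+n^{-1/3}t)-(H_{\scriptscriptstyle G}-H_n)(x)]$ is a Bayesian-bootstrap (exchangeable weighted) empirical process. Conditionally on the data, it equals approximately $n^{-1/3}\sum_i (E_i-1)[\,p(x+n^{-1/3}t-Z_i)-p(x-Z_i)\,]$ minus its centring. By a conditional Lindeberg CLT, its covariance is $n^{2/3}\cdot n^{-1}\sum_i(\Delta p_i)^2$, which converges almost surely (or in probability) to $g_0(x)k(0)^{-2}|t|$, independently of the $E_i$. This gives a limit $\sqrt{g_0(x)}k(0)^{-1}\mathbb{W}_2$, independent of $\mathbb{W}_1$. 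The conditional convergence given the data must be coupled with the unconditional convergence of the data process to $\mathbb{W}_1$. We would do this either jointly via a Skorokhod construction, or by using the formulation of \cite{54}, reading ``$\xrightarrow{d}$'' as weak convergence of the posterior cdf as a random variable.

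\textbf{Conclusion and main obstacle.} The linear term $-zt$ completes the limit process to
\[
\frac{\sqrt{g_0(x)}}{k(0)}(\mathbb{W}_1+\mathbb{W}_2)(t)+\tfrac12 f_0(x)t^2-zt .
\]
An argmax continuous mapping theorem then applies. Brownian scaling, together with $\operatorname{argmin}\{\mathbb{W}(t)+ct^2-zt\}=z/(2c)+{}$shift, as in \cite{74} and \cite{54}, yields the constant $2^{2/3}f_0(x)^{1/3}g_0(x)^{1/3}k(0)^{-2/3}$ and the stated limit conditional on $\mathbb{W}_1$.

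The main obstacle is tightness of $\hat t_n$, uniformly in probability under the posterior. We need moduli-of-continuity bounds of order $n^{-1/3}|t|^{1/2}$ for the weighted process over shells $|t|\in[2^j,2^{j+1}]$, plus the boundary constraint at $T$. We would first try a conditional maximal inequality for exchangeable multiplier processes, namely the multiplier inequality of van der Vaart–Wellner applied to the class $\{p(\cdot-z)\}$ restricted to shrinking windows, using Condition \ref{cond:2} to control bracketing entropy. For consistency of the argmin we would use Condition \ref{cond:1}-type Glivenko–Cantelli arguments.
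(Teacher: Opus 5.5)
Your proposal is correct and follows essentially the same route as the paper. The common steps are:
\begin{itemize}
\item the switching relation and $n^{-1/3}$-localisation;
\item the split into the Taylor drift $\tfrac12 f_0(x)t^2$ and the empirical increment, which is dominated by the jump $p(0)=1/k(0)$ and gives $\sqrt{g_0(x)}\,k(0)^{-1}\mathbb{W}_1$;
\item the posterior increment, handled through $G=V_nQ+(1-V_n)\mathbb{B}_n$ with exponential Bayesian-bootstrap weights and a conditional Lindeberg CLT;
\item a conditional argmin theorem with tightness, followed by Brownian rescaling.
\end{itemize}
Your $|t|^{1/2}$-type modulus for the tightness step is in fact the right scaling. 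The jump of $p$ at $0$ makes the squared $L_2$-size of the envelope in Lemma~\ref{lemma: conditional stoch boundedness} of order $\eta$, not $\eta^2$ as written there.

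A few small slips need fixing:
\begin{itemize}
\item $Q$ should be distributed as $\mathrm{DP}(\alpha)$, not $\alpha/|\alpha|$.
\item The conditional variance is $n^{-2/3}\sum_i(\Delta p_i)^2$, not $n^{2/3}\cdot n^{-1}\sum_i(\Delta p_i)^2$.
\item For the right derivative, the exact switching uses the smallest minimiser, $\hat F_{\scriptscriptstyle G}(x)\ge a\iff T_{\scriptscriptstyle G}(a)\le x$, rather than your largest-minimiser version with strict inequalities. This difference is asymptotically negligible.
\end{itemize}
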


\begin{proof}
Consider for $a \in (0,1)$ and $\tau \in [0,T)$ the event $T_{\scriptscriptstyle{G}}(a) > \tau$ where:
\begin{align}\label{def: T_G}
T_{\scriptscriptstyle{G}}(a) := \inf \{ t \in [0,T] : H_{\scriptscriptstyle{G}}(t) - a t \, \, \text{is minimal} \}.
\end{align}
This event takes place if and only if the maximal affine function with slope $a$ that is bounded above by $H_{\scriptscriptstyle{G}}$ coincides with $H_{\scriptscriptstyle{G}}$ at a point $t_0 > \tau$, while for all $t \le \tau$, it stays strictly beneath $H_{\scriptscriptstyle{G}}$. This is equivalent to the inequality $\hat{F}_{\scriptscriptstyle{G}}(\tau) < a$. So $\forall \, \tau \in [0,T)$ and $a \in (0,1):$
\[
T_{\scriptscriptstyle{G}}(a) \le \tau \iff \hat{F}_{\scriptscriptstyle{G}}(\tau) \ge a.
\]
Fix $x \in (0,T)$. Then, by Lemma \ref{lemma: rewriting1}, for $a \in \mathbb{R}$ and $n$ sufficiently large
\begin{align*}
n^{1/3} (\hat{F}_{\scriptscriptstyle{G}}(x) - F_0(x)) < a \iff \inf \{ t \in [-x n^{1/3}, (T-x) n^{1/3}] : Z_{\scriptscriptstyle{G}}(t) - a t \, \, \text{is minimal} \} > 0,
\end{align*}
where
\begin{align}\label{def: Z_G}
Z_{\scriptscriptstyle{G}}(t) := n^{2/3} \left( H_{\scriptscriptstyle{G}}(x + t n^{-1/3}) - H_{\scriptscriptstyle{G}}(x) - F_0(x) t n^{-1/3} \right).
\end{align}
This process can be decomposed as
\[
Z_{\scriptscriptstyle{G}}(t) = W_{\scriptscriptstyle{G}}(t) + W_n(t) + n^{2/3} (H_0(x + t n^{-1/3}) - H_0(x) - F_0(x) t n^{-1/3})
\]
where:
\begin{align*}
W_{\scriptscriptstyle{G}}(t) &:= n^{2/3} (H_{\scriptscriptstyle{G}}(x + n^{-1/3}t) - H_{\scriptscriptstyle{G}}(x) - H_n(x + n^{-1/3}t) + H_n(x)), \\
W_n(t) &:= n^{2/3} (H_{n}(x + n^{-1/3}t) - H_{n}(x) - H_0(x + n^{-1/3}t) + H_0(x)).
\end{align*}
We can rewrite $W_n(t)$ by identifying two processes: one of them converging weakly and a remainder term that converges to zero in probability. By defining the $\gamma$-Hölder continuous, $\gamma > 1/2$, function $\tilde{p} := p - p(0) \ind_{[0,\infty)}$, and 
\begin{align*}
\mathbb{W}_n(t) &:= n^{2/3} p(0) \int_0^\infty (\ind_{[0, x + n^{-1/3}t)}(z) - \ind_{[0, x)}(z)) \, d(\mathbb{G}_n - G_0)(z), \\
R_n(t) &:= n^{2/3} \int_0^\infty (\tilde{p}(x + n^{-1/3}t - z) - \tilde{p}(x - z)) \, d(\mathbb{G}_n - G_0)(z),
\end{align*}
we can write:
\begin{align}\label{eq: W_n decomposition}
	W_n(t) = \mathbb{W}_n(t) + R_n(t).
\end{align}
Using the fact $\tilde{p} = p - p(0) \ind_{[0,\infty)}$ is Hölder continuous of degree $\gamma > 1/2$ we have that as $n \to \infty$:
\begin{align}\label{eq: residual term convergence}
	\sup_{|t| \in K} |R_n(t)| \xrightarrow{G_0} 0 
\end{align}
for all compact sets $K \subset \mathbb{R}$ (see appendix in \cite{74} for a proof). By example 3.2.14. in \cite{5}, for every compact $K \subset \mathbb{R}$:
\begin{align}\label{eq: mathbbW_n convergence}
	\mathbb{W}_n \rightsquigarrow \frac{\sqrt{g_0(x)}}{k(0)} \mathbb{W}_1 \quad \text{in } \, \, \ell^\infty(K)
\end{align}
where $\mathbb{W}_1$ is a two sided standard Brownian Motion originating at $0$. Now, using Taylor's theorem, for $n \to \infty$
\begin{align}\label{eq: Taylor expansion}
	n^{2/3} (H_0(x + n^{-1/3}t) - H_0(x) - F_0(x) n^{-1/3}t) = \frac{1}{2} f_0(x) t^2 + o(1),
\end{align}
uniformly for $t$ in compacta. 

We now turn to studying $W_{\scriptscriptstyle{G}}$. Define $f_t^n(z) = n^{1/3} (p(x+n^{-1/3}t-z)\ind_{[0, x+n^{-1/3}t)}(z) - p(x-z)\ind_{[0, x)}(z) )$
then, for independent $Q \sim \text{DP}(\alpha)$, $\mathbb{B}_n \sim \text{DP}(n \mathbb{G}_n)$ and $V_n \sim \text{Be}(|\alpha|, n)$:
\[
G f_t^n = V_n Q f_t^n + (1-V_n) \mathbb{B}_n f_t^n
\]
and thus we study the process:
\[
W_{\scriptscriptstyle{G}}(t) = n^{1/3}(G-\mathbb{G}_n)f_t^n = n^{1/3} V_n (Q-\mathbb{G}_n)f_t^n + n^{1/3}(1-V_n)(\mathbb{B}_n-\mathbb{G}_n)f_t^n.
\]
By Lemma \ref{lemma: simplifications} (eq.\ \eqref{eq: V_n term 1}-\eqref{eq: V_n term 2}), the terms premultiplied by $V_n$ converge to zero in probability under $G_0$, so we can focus on the study of:
\[
W_n^*(t) := n^{1/3} (\mathbb{B}_n-\mathbb{G}_n) f_t^n.
\]
By Proposition G.2 in \cite{4} for \( \varepsilon_i \overset{\text{i.i.d.}}{\sim} \text{Exp}(1) \) and \( W_{ni} := \varepsilon_i / \bar{\varepsilon}_n \), we have \( (W_{n1}, \ldots, W_{nn}) \sim \text{Dir}(n; 1/n, \dots, 1/n) \). Since \( \mathbb{B}_n \sim \text{DP}(n \mathbb{G}_n) \), denoting by $\delta_{Z_i}$ Dirac measure at $Z_i$, we have the following representation in distribution of $\mathbb{B}_n$ (c.f.\ Example 3.7.9 in \cite{5}):
	\[
	\mathbb{B}_n = \frac{1}{n} \sum_{i=1}^n W_{ni} \delta_{Z_i}.
	\]
Because of this representation (c.f.\ section 3.7.2. in \cite{5}), $\mathbb{B}_n$ is also known as \textit{Bayesian bootstrap}. Conclude, by Lemma \ref{lemma: further simplification}, that uniformly for $t$ in compacta:
\begin{align}\label{eq: W_n^* simplified}
	W_n^*(t) = n^{1/3}(\mathbb{B}_n - \mathbb{G}_n)f_t^n &= \frac{1}{n^{2/3}} \sum_{i=1}^n (W_{ni}-1) f_t^n(Z_i) = \mathbb{W}_n^*(t) + o_p(1) .
\end{align}
where for \( \varepsilon_i \overset{\text{i.i.d.}}{\sim} \text{Exp}(1) \), the process $\mathbb{W}_n^*$, for $\tilde{f}_t^n(z) := n^{1/3} p(0) \ind_{[x, x+n^{-1/3}t)}(z)$, is defined as:
\begin{align}\label{eq: mathbbW star process}
    \mathbb{W}_n^*(t) = \frac{1}{\sqrt{n}} \sum_{i=1}^n \frac{(\varepsilon_i-1)}{\sqrt[6]{n}} \tilde{f}_t^n(Z_i).
\end{align}

Using the results in Appendix A, Lemma \ref{lemma: conditional convergence W star}, which shows that $\mathbb{W}_n^*$ converges conditionally in $\ell^\infty(K)$ to $\frac{\sqrt{g_0(x)}}{k(0)} \mathbb{W}_2 $, together with Lemma \ref{lemma: simplifications} and \eqref{eq: W_n^* simplified}, we have in probability under $G_0$, as $n \to \infty$,
\begin{align}\label{eq: W_G convergence}
	W_{\scriptscriptstyle{G}} \mid Z_1, \dots, Z_n \rightsquigarrow \frac{\sqrt{g_0(x)}}{k(0)}\mathbb{W}_2 \quad \text{in } \, \, \ell^\infty(K),
\end{align}
for every compact $K \subset \mathbb{R}$. Hence, by \eqref{eq: W_n decomposition}, \eqref{eq: residual term convergence}, \eqref{eq: mathbbW_n convergence}, \eqref{eq: W_G convergence} and Slutsky's lemma, for every compact $K \subset \mathbb{R}$:
\begin{align*}
& (W_{\scriptscriptstyle{G}} (t) + W_n(t) + n^{2/3} (H_0(x + t n^{-1/3}) - H_0(x) - F_0(x) t n^{-1/3}) : t \in K) \mid  Z_1, \dots, Z_n \\
& \quad \quad \quad \quad \quad \leadsto \left(  \frac{\sqrt{g_0(x)}}{k(0)} \mathbb{W}_1(t) +  \frac{\sqrt{g_0(x)}}{k(0)} \mathbb{W}_2(t) + \frac{1}{2} f_0(x) t^2 : t \in K \right) \mid \mathbb{W}_1. \numberthis \label{eq: general weak convergence}
\end{align*}
Let us elaborate on why the unconditional convergence $W_n \rightsquigarrow \frac{\sqrt{g_0(x)}}{k(0)}\mathbb{W}_1$ in $\ell^\infty(K)$ (equation \eqref{eq: mathbbW_n convergence}) gives rise to a \emph{conditional} limit involving $\mathbb{W}_1$ in \eqref{eq: general weak convergence}. The key observation is that $W_{\scriptscriptstyle{G}}$ is driven exclusively by the Bayesian bootstrap weights $(\varepsilon_i)_{i=1}^n$, which are independent of the data $(Z_1, \dots, Z_n)$. Hence $W_{\scriptscriptstyle{G}}$ and $W_n$ are conditionally independent given the data. As $n \to \infty$: the orbit of $W_n$ (a measurable function of the data alone) converges to $\frac{\sqrt{g_0(x)}}{k(0)}\mathbb{W}_1$, and $W_{\scriptscriptstyle{G}}|Z_1,\dots,Z_n$ converges independently to $\frac{\sqrt{g_0(x)}}{k(0)}\mathbb{W}_2$. In the joint limit, the realization of $\mathbb{W}_1$ (coming from $W_n$) plays the role of a fixed, conditioning object, while $\mathbb{W}_2$ remains a fresh independent Brownian motion. Conditional on the data, an application of Slutsky's lemma in the conditional setting then yields \eqref{eq: general weak convergence} with the right-hand side conditioned on $\mathbb{W}_1$.

Now, in order to apply Lemma A.3 in \cite{65}, we need to prove that there exists a stochastically bounded sequence $\tilde{M}_n$ such that:
\begin{align}\label{eq: conditional stoch boundedness}
	\Pi_n \left( \inf \{ t \in [-x n^{1/3}, (T-x)n^{1/3}] : Z_{\scriptscriptstyle{G}}(t) - a t \, \,  \text{is  minimal} \} \le \tilde{M}_n \mid Z_1, \dots, Z_n \right) \xrightarrow{G_0} 1 
\end{align}
The proof of \eqref{eq: conditional stoch boundedness} is given in Appendix A, Lemma \ref{lemma: conditional stoch boundedness}. An application of Lemma A.3 in \cite{65}, together with \eqref{eq: general weak convergence} and \eqref{eq: conditional stoch boundedness} gives:
\begin{align*}
	&\inf \{ t \in \mathbb{R} : Z_{\scriptscriptstyle{G}}(t) - a t \text{ is minimal} \} \mid Z_1, \dots, Z_n \\
	&\leadsto \argmin_{t \in \mathbb{R}} \left( \frac{\sqrt{g_0(x)}}{k(0)} \mathbb{W}_1(t) + \frac{\sqrt{g_0(x)}}{k(0)} \mathbb{W}_2(t) + \frac{1}{2} f_0(x) t^2 - at \right) \mid \mathbb{W}_1.
\end{align*}
Using that $\forall \, c > 0$, $b \in \mathbb{R}$ that as process the standard Brownian Motion satisfies $W(\cdot/a^2) \stackrel{d}{=} W(\cdot)/a$:
\[ \operatorname{argmin}_{t \in \mathbb{R}} \{ c (\mathbb{W}_1(t) + \mathbb{W}_2(t)) + (t-b)^2 \} \stackrel{d}{=} c^{2/3} \operatorname{argmin}_{t \in \mathbb{R}} \{ \mathbb{W}_1(t) + \mathbb{W}_2(t) + t^2 \} + b. \]
Note that this distributional equality holds \emph{jointly in the pair $(\operatorname{argmin}, \mathbb{W}_1)$}: the time-rescaling $t \mapsto c^{1/3}t$ is applied simultaneously to both $\mathbb{W}_1$ and $\mathbb{W}_2$, so the rescaled $\mathbb{W}_1(c^{1/3}\,\cdot)/c^{1/6}$ has the same law as $\mathbb{W}_1$, and a joint rescaling argument shows that the equality in distribution is in the space of $(\text{argmin}, \mathbb{W}_1)$. Thus conditioning on $\mathbb{W}_1$ on the right-hand side corresponds to conditioning on the same Brownian trajectory (up to the distributional scaling identity), and the conditional distribution of the argmin given $\mathbb{W}_1$ is the same on both sides of the $\stackrel{d}{=}$. Using the change of variable $t \to \sqrt{\frac{2}{f_0(x)}} t$ we get
\begin{align*}
&\argmin_{t \in \mathbb{R}} \left( \frac{\sqrt{g_0(x)}}{k(0)} \mathbb{W}_1(t) + \frac{\sqrt{g_0(x)}}{k(0)} \mathbb{W}_2(t) + \frac{1}{2} f_0(x) t^2 - at \right) \\
&\, \, \, \, \, \stackrel{d}{=} \frac{g_0(x)^{1/3}}{k(0)^{2/3}} \left( \frac{2}{f_0(x)} \right)^{1/6+1/2} \argmin_{t \in \mathbb{R}} \{ \mathbb{W}_1(t) + \mathbb{W}_2(t) + t^2 \} + \frac{a}{f_0(x)},	
\end{align*}
which proves the claim, since:
\begin{align*}
&\pb \left(\argmin_{t \in \mathbb{R}} \left( \frac{\sqrt{g_0(x)}}{k(0)} \mathbb{W}_1(t) + \frac{\sqrt{g_0(x)}}{k(0)} \mathbb{W}_2(t) + \frac{1}{2} f_0(x) t^2 - at \right) > 0 \mid \mathbb{W}_1  \right) \\
&=\pb \left( \frac{g_0(x)^{1/3}}{k(0)^{2/3}} \left( \frac{2}{f_0(x)} \right)^{1/6+1/2} \argmin_{t \in \mathbb{R}} \{ \mathbb{W}_1(t) + \mathbb{W}_2(t) + t^2 \} > - \frac{a}{f_0(x)} \mid \mathbb{W}_1  \right)	\\
&= \pb \left( \frac{2^{2/3} f_0(x)^{1/3} g_0(x)^{1/3}}{k(0)^{2/3}} \argmin_{t \in \mathbb{R}} \{\mathbb{W}_1(t) + \mathbb{W}_2(t) + t^2 \} \le a \mid \mathbb{W}_1 \right).
\end{align*}

\end{proof}

The limiting distribution established in Theorem \ref{thm: bayesian UQ} is fundamental for establishing the asymptotic coverage of credible intervals. For any given $n \ge 1$ and $\tau \in [0, 1]$, the credible interval is represented by $I_{n,\tau} := [Q_{n,1-\tau/2}, Q_{n,\tau/2}]$. Here, $Q_{n,\tau}$ signifies the $(1-\tau)$-quantile of the projection-posterior distribution associated with $\hat{F}_{\scriptscriptstyle{G}}(x)$:
$$ Q_{n,\tau} := \inf \{ z \in \mathbb{R} \, : \, \Pi ( \hat{F}_{\scriptscriptstyle{G}}(x) \le z \mid Z_1, \ldots, Z_n ) \ge 1- \tau \}. $$
Consequently, $F_0(x) \le Q_{n,\tau}$ is true if and only if $\Pi ( \hat{F}_{\scriptscriptstyle{G}}(x) \le F_0(x) \mid Z_1, \ldots, Z_n ) \le 1- \tau$. The limiting coverage of the interval $I_{n,\tau}$ as $n \rightarrow \infty$ can be determined. To begin, observe that
\begin{align}\label{eq: one side coverage}
	\pb (F_0(x) \le Q_{n,\tau}) &= \pb (\Pi(n^{1/3}(\hat{F}_{\scriptscriptstyle{G}}(x) - F_0(x)) \le 0 \mid Z_1,\ldots,Z_n) \le 1-\tau) \\
	&\to \pb \left( \pb (\operatorname{argmin}_{t \in \mathbb{R}}\{ \mathbb{W}_1(t) + \mathbb{W}_2(t) + t^2\} \le 0 \mid \mathbb{W}_1) \le 1-\tau \right).
\end{align}
This can be written as $\pb (Z_B \le 1-\tau)$, where $Z_B$ is specified as
\begin{align} \label{eq:ZB_def}
	Z_B := \pb (\operatorname{argmin}_{t \in \mathbb{R}}\{ \mathbb{W}_1(t) + \mathbb{W}_2(t) + t^2 \} \le 0 \mid \mathbb{W}_1).
\end{align}
The random variable $Z_B$, having distribution known as the Bayes--Chernoff distribution, plays a pivotal role in understanding the limiting coverage. While the distribution of $Z_B$ is analogous to the Chernoff distribution for the MLE, $Z_B$ takes values on the unit interval and exhibits symmetry around $1/2$ ($Z_B = 1 - Z_B$ in distribution). Employing \eqref{eq: one side coverage}, we arrive at the primary conclusion regarding the limiting coverage of $I_{n,\tau}$ as $n \to \infty$:
$$\pb(F_0(x) \in I_{n,\tau}) \to \pb (\tau/2 \le Z_B \le 1-\tau/2).$$
Therefore, the limiting coverage of a $(1-\tau)$-credible interval does not necessarily equal $(1-\tau)$. It can, however, be computed using the distribution of $Z_B$. Let us define $A(u) = \pb(Z_B \le u)$ for $u \in [0, 1]$. The function $A$ is continuous, strictly increasing, and maps the interval $[0, 1]$ onto itself. Consequently, the inverse function $A^{-1}$ is well-defined and also strictly increasing. Additionally, the function $A$ features a symmetry property, which is formalized in Lemma 3.5 in \cite{54} and reported here for completeness.
\begin{lemma}[Lemma 3.5 in \cite{54}]\label{lemma: A properties} 
	The random variable $Z_B \in [0,1]$ is symmetrically distributed about $1/2$, and hence $A(1-u) = 1 - A(u)$, for all $u \in [0,1]$, $A^{-1}(1-v) = 1 - A^{-1}(v)$ for all $v \in [0,1]$, and $A(1/2) = 1/2 = A^{-1}(1/2)$.
\end{lemma}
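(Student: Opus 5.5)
The plan is to prove the symmetry of $Z_B$ about $1/2$ through a reflection of the limiting process, and then read off the three identities for $A$ and $A^{-1}$ from continuity and strict monotonicity of $A$, both of which were asserted just before the lemma.

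\emph{Step 1: ties have probability zero.} First I would show that, almost surely, the argmin $\hat t$ of $\mathbb{W}_1(t)+\mathbb{W}_2(t)+t^2$ is unique. It is finite because $\mathbb{W}_1+\mathbb{W}_2 = \sqrt{2}\,\mathbb{W}$ for a standard two-sided Brownian motion $\mathbb{W}$, so the argmin is a scaled Chernoff variable. Since the Chernoff distribution has a density, $\pb(\hat t = 0)=0$. By Fubini this also gives, for almost every path of $\mathbb{W}_1$, $\pb(\hat t = 0\mid \mathbb{W}_1)=0$. Consequently
\[
Z_B = \pb(\hat t\le 0\mid \mathbb{W}_1) = \pb(\hat t< 0\mid \mathbb{W}_1) = 1-\pb(\hat t>0\mid\mathbb{W}_1)\quad\text{a.s.}
\]

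\emph{Step 2: reflection.} Define the time-reversed processes $\tilde{\mathbb{W}}_j(t):=\mathbb{W}_j(-t)$ for $j=1,2$. They are again independent two-sided Brownian motions, so the pair $(\tilde{\mathbb{W}}_1,\tilde{\mathbb{W}}_2)$ has the same law as $(\mathbb{W}_1,\mathbb{W}_2)$. The drift $t^2$ is even, hence the argmin for the reflected pair equals $-\hat t$. Because the conditional probability is a measurable functional $\Phi$ of the conditioning path, with $Z_B=\Phi(\mathbb{W}_1)$, equality in law of the pairs gives
\[
\Phi(\mathbb{W}_1)\stackrel{d}{=}\Phi(\tilde{\mathbb{W}}_1)=\pb(-\hat t\le 0\mid \mathbb{W}_1)=\pb(\hat t\ge 0\mid\mathbb{W}_1).
\]
By Step 1 the right-hand side equals $1-Z_B$ almost surely, so $Z_B\stackrel{d}{=}1-Z_B$.

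\emph{Step 3: consequences for $A$.} From the symmetry, $A(1-u)=\pb(1-Z_B\le 1-u)=\pb(Z_B\ge u)=1-A(u)$, where the last equality uses $\pb(Z_B=u)=0$. This atomlessness follows from the continuity of $A$. Next set $v=A(u)$ and apply $A^{-1}$ to $A(1-u)=1-v$; this yields $A^{-1}(1-v)=1-A^{-1}(v)$, using that $A$ is a bijection of $[0,1]$. Finally, taking $u=1/2$ gives $A(1/2)=1-A(1/2)$, so $A(1/2)=1/2$, and therefore $A^{-1}(1/2)=1/2$.

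\emph{Main obstacle.} The delicate point is Step 2: making the conditional-law argument rigorous, namely that $\Phi$ can be chosen as a version working for both $\mathbb{W}_1$ and $\tilde{\mathbb{W}}_1$. I would handle this through the disintegration $\Phi(w)=\pb(\operatorname{argmin}\{w+\mathbb{W}_2+t^2\}\le 0)$, using that $\mathbb{W}_2$ is independent of $\mathbb{W}_1$ and that $\mathbb{W}_2\stackrel{d}{=}\mathbb{W}_2(-\cdot)$. With this explicit formula, the reflection identity holds pathwise for every $w$ whose argmin is almost surely unique, not merely in distribution.
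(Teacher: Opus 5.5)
The paper does not prove this lemma; it imports it from \cite{54}. Your argument is therefore a self-contained replacement for a citation, not a variant of a proof in the text. It is correct, and it is the natural proof of the symmetry. You reverse time in both Brownian motions and use that $t^2$ is even, so the argmin changes sign. You then use that the argmin has no atom at $0$, which turns $\pb(\hat t\ge 0\mid\mathbb{W}_1)$ into $1-Z_B$. Step 3 is also correct. It relies on the continuity and bijectivity of $A$, which the paper asserts without proof, so your lemma rests on those facts just as the paper's statement does.

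Two small refinements:
\begin{itemize}
\item In the pathwise form $\Phi(w(-\cdot))=1-\Phi(w)$ from your last paragraph, you need two things about $\operatorname{argmin}\{w+\mathbb{W}_2+t^2\}$. It must be a.s.\ unique, so that reversing time maps it exactly to its negative. It must also have no atom at $0$, so that $\pb(\cdot\ge 0)=1-\pb(\cdot\le 0)$. Your Step 1 plus Fubini gives both for almost every $w$, so nothing is lost, but the second condition should be stated explicitly.
\item You do not need the existence of a density for Chernoff's distribution to rule out an atom at $0$. By the law of the iterated logarithm, $\mathbb{W}_1+\mathbb{W}_2=\sqrt{2}\,\mathbb{W}$ takes values of order $-\sqrt{t\log\log(1/t)}$ at times arbitrarily close to $0$. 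These values dominate $t^2$, so $0$ is almost surely not a minimizer.
\end{itemize}
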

Using these properties, the limiting coverage $\pb(\tau/2 \le Z_B \le 1-\tau/2)$ can be rewritten as $A(1-\tau/2) - A(\tau/2) = A(1-\tau/2) - (1-A(1-\tau/2)) = 2A(1-\tau/2) - 1$. A recalibration technique allows us to achieve any desired asymptotic coverage, say $1-\beta$. For a two-sided credible interval $I_{n,\tau}$, Theorem \ref{thm: bayesian UQ} and Lemma \ref{lemma: A properties} imply that its asymptotic coverage is $2A(1-\tau/2)-1$. To obtain a desired coverage of $1-\beta$, we set $2A(1-\tau/2)-1 = 1-\beta$. This implies $1-\tau/2 = A^{-1}(1-\beta/2)$, or $\tau/2 = 1 - A^{-1}(1-\beta/2) = A^{-1}(\beta/2)$. Thus, we choose $\tau = 2A^{-1}(\beta/2)$ (c.f.\ Corollary 3.6 in \cite{54}). 
\begin{corollary}\label{cor: coverage}
	For any $0 < \beta < 1$, $\pb (F_0(x) \in I_{n,2A^{-1}(\beta/2)}) \to 1-\beta$.
\end{corollary}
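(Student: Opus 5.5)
The plan is to write the two-sided coverage as the difference of two one-sided coverages, take limits of each via \eqref{eq: one side coverage}, and then substitute the recalibrated level using Lemma \ref{lemma: A properties}. Set $\tau := 2A^{-1}(\beta/2)$. Since $A$ is a continuous strictly increasing bijection of $[0,1]$ with $A(1/2)=1/2$, and $\beta/2 \in (0,1/2)$, we get $\tau \in (0,1)$, so $I_{n,\tau}$ is well defined. We then split
\[
\pb(F_0(x)\in I_{n,\tau}) = \pb(F_0(x)\le Q_{n,\tau/2}) - \pb(F_0(x) < Q_{n,1-\tau/2}),
\]
which uses $Q_{n,1-\tau/2}\le Q_{n,\tau/2}$ (monotonicity of quantiles in the level).

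\textbf{Upper endpoint.} By \eqref{eq: one side coverage}, applied with level $\tau/2$,
\[
\pb(F_0(x)\le Q_{n,\tau/2}) \to \pb(Z_B\le 1-\tau/2) = A(1-\tau/2).
\]
To justify this limit we need more than the statement of Theorem \ref{thm: bayesian UQ}. The theorem gives convergence of the posterior probability $\Pi(n^{1/3}(\hat F_G(x)-F_0(x))\le 0\mid Z_1,\dots,Z_n)$ in distribution, under $G_0$, to $Z_B$; indeed, the proof establishes weak convergence of the conditional law jointly with the data process $W_n$. Because $A$ is continuous, the threshold $1-\tau/2$ is a continuity point of the law of $Z_B$, and the portmanteau theorem then yields the displayed limit.

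\textbf{Lower endpoint.} By the same argument with level $1-\tau/2$, $\pb(F_0(x)\le Q_{n,1-\tau/2}) \to A(\tau/2)$. The strict versus non-strict inequality does not matter in the limit: the event $F_0(x)<Q_{n,1-\tau/2}$ corresponds to the posterior probability lying strictly below $\tau/2$, and continuity of $A$ makes the boundary asymptotically negligible.

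\textbf{Combining.} Putting the two limits together and applying Lemma \ref{lemma: A properties}, the coverage tends to
\[
A(1-\tau/2)-A(\tau/2) = 2A(1-\tau/2)-1.
\]
The lemma gives $A^{-1}(1-v)=1-A^{-1}(v)$, so $1-\tau/2 = 1-A^{-1}(\beta/2) = A^{-1}(1-\beta/2)$. Hence $A(1-\tau/2)=1-\beta/2$, and the limit equals $2(1-\beta/2)-1 = 1-\beta$.

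\textbf{Main obstacle.} The delicate step is passing from the conditional weak convergence in Theorem \ref{thm: bayesian UQ} to convergence in distribution of the random variable $\Pi(\cdot\le 0\mid Z_1,\dots,Z_n)$ to $Z_B$. This needs the joint (data plus posterior) convergence, for which the argmin continuous mapping in the proof already provides the structure. It also needs the conditional limit law of the argmin given $\mathbb W_1$ to have no atom at $0$ almost surely, so that the conditional probability map is continuous at the limit. I would take both facts from the argument preceding \eqref{eq:ZB_def} and from \cite{54}, where continuity of $A$ is established.
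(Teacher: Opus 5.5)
Your proposal is correct and follows the same route as the paper: the two one-sided limits from \eqref{eq: one side coverage} give limiting coverage $A(1-\tau/2)-A(\tau/2)=2A(1-\tau/2)-1$ via Lemma \ref{lemma: A properties}, and the choice $\tau=2A^{-1}(\beta/2)$ makes this equal to $1-\beta$. You are slightly more careful than the paper's text about strict versus non-strict inequalities at the endpoints and about continuity points of the law of $Z_B$, but the argument is the same.
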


\begin{figure}[H]
    \centering
    \includegraphics[width=0.9\textwidth]{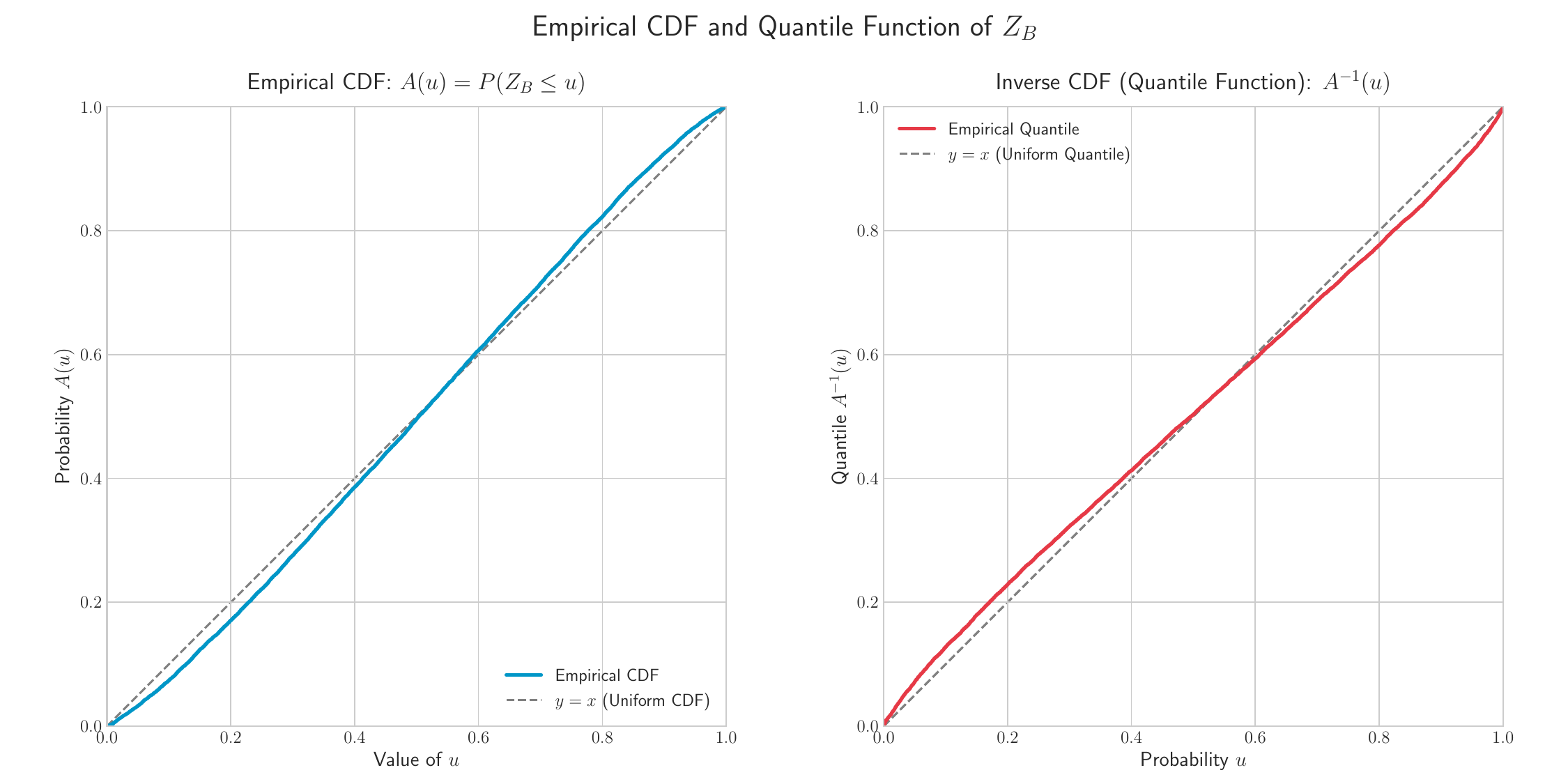}
    \caption{The Bayes--Chernoff distribution $Z_B$ and its inverse function $A^{-1}$, with the identity line for reference (the computations are based on 20000 gathered samples of $Z_B$).}
	\label{img:ZB_cdf_and_inverse}
\end{figure}
\vspace{-0.5cm}
The back-calculation of the required credibility level $1-\tau = 1-2A^{-1}(\beta/2)$ to achieve a coverage level $1-\beta$ can be obtained from a Table 2 in \cite{54}, which presents values of the function $A^{-1}$. For convenience of the reader and for completeness, we report Table 2 from \cite{54} here below, in Table \ref{tab:A_inverse_values} (we verified via our simulations its correctness). Numerical calculations suggest that $A(u) \ge u$ for all values of $u \ge 1/2$ where the computation has been numerically performed and the results are shown in Fig.\ \ref{img:ZB_cdf_and_inverse}. This implies that a credible interval typically has asymptotic coverage greater than its nominal coverage, which may be called a reverse Cox phenomenon.
\vspace{-0.5cm}
\begin{table}[H]
\centering
\caption{Table of the values of the function $A^{-1}$}\label{tab:A_inverse_values}
\begin{adjustbox}{max width=\textwidth}
\begin{tabular}{lccccccccccc}
\hline
$v$ & 0.700 & 0.750 & 0.800 & 0.850 & 0.900 & 0.910 & 0.920 & 0.930 & 0.940 & 0.950 & 0.960 \\
\vspace{0.1cm}
$A^{-1}(v)$ & 0.677 & 0.723 & 0.771 & 0.820 & 0.874 & 0.885 & 0.897 & 0.909 & 0.922 & 0.934 & 0.946 \\
$v$ & 0.965 & 0.970 & 0.975 & 0.980 & 0.985 & 0.990 & 0.995 & 0.996 & 0.997 & 0.998 & 0.999 \\
$A^{-1}(v)$ & 0.952 & 0.960 & 0.966 & 0.973 & 0.980 & 0.986 & 0.993 & 0.995 & 0.996 & 0.997 & 0.999 \\
\hline
\end{tabular}
\end{adjustbox}
\end{table}

\section{Simulation study}

In this section, we present simulations to illustrate the practical behavior of the IIP in comparison to the IIE studied in \cite{74}. In this simulation study, we analyze the setting where the underlying cdf \( F_0 \) is exponential with rate 1.2. Samples \( Z_1, \ldots, Z_n \) used to compute the projection posterior of the IIP are generated hierarchically: \( X_1, \ldots, X_n \sim F_0 \), \( Y_1, \ldots, Y_n \sim k\), with \( Z_i = X_i + Y_i \) for all \( i \). The noise density $k$ which crucially determines the solvability of the Volterra integral equation in \eqref{eq:1}, is chosen from the following set: $\{ \mathrm{Exp}(1), \mathrm{Mix}: \frac{1}{4} \mathrm{Erlang}(2)- \frac{3}{4} \mathrm{Exp}(2), \mathrm{HalfNorm}(0,1), \mathrm{HalfCauchy}(0,2), \mathrm{Lomax}(10,1), \mathrm{HalfLogistic}\}$. We recall the associated kernels $k$ in Fig.\ \ref{img:resolvent_kernels_grid} and we plot the numerical solutions $p$ (and the analytical solutions when available) of the Volterra integral equation in \eqref{eq:1} for each of these noise distributions. 
\begin{figure}[H]
    \centering
    \includegraphics[width=\textwidth]{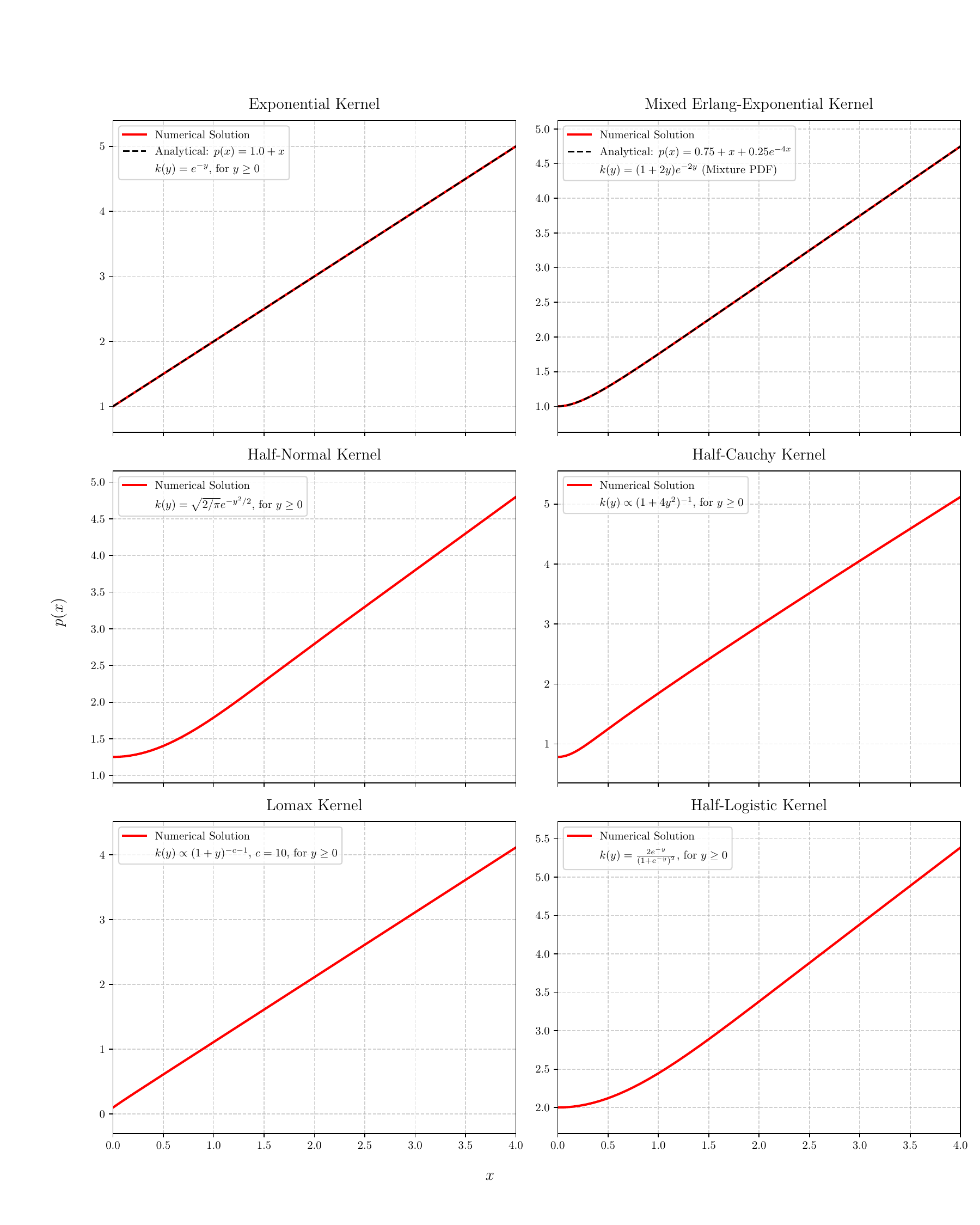}
    \caption{Resolvent of proposed kernels.}
	\label{img:resolvent_kernels_grid}
\end{figure}

\begin{figure}[H]
    \centering
    \includegraphics[width=\textwidth]{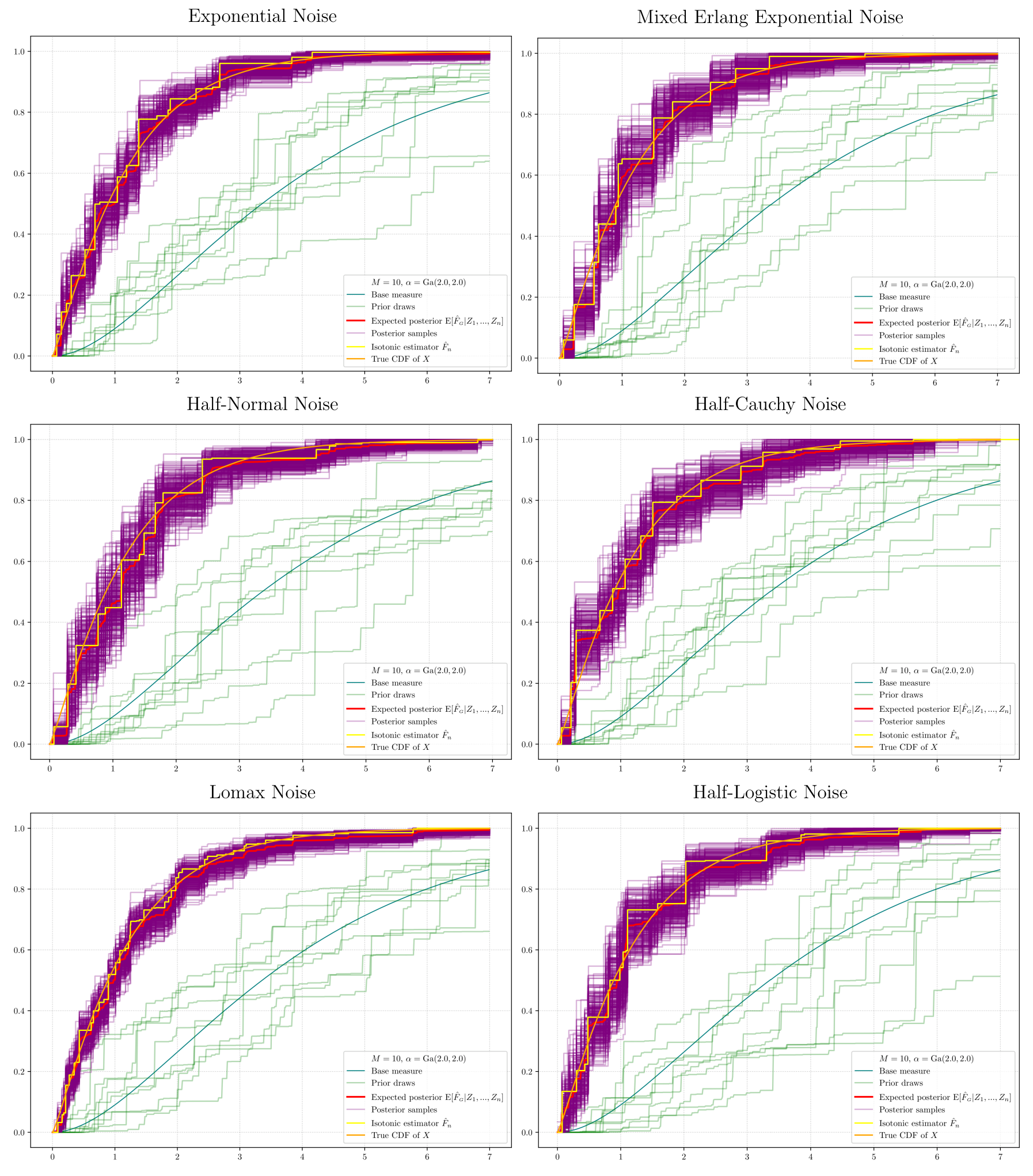}
    \caption{Combined deconvolution plots for $n=200$. In green, we plot draws of \( \hat{F}_{\scriptscriptstyle{G}} \) based on prior draws from \( G \sim \operatorname{DP}(M \alpha) \) with $\alpha \sim \mathrm{Ga}(2,2)$ and prior precision $M =10$; in purple, draws from the isotonized posterior; in red, the average of these draws (approximating the posterior mean); in yellow, the (IIE); and in orange \( F_0 \).}
	\label{img:deconvolution_plots}
\end{figure}

\begin{figure}[H]
    \centering
    \includegraphics[width=\textwidth]{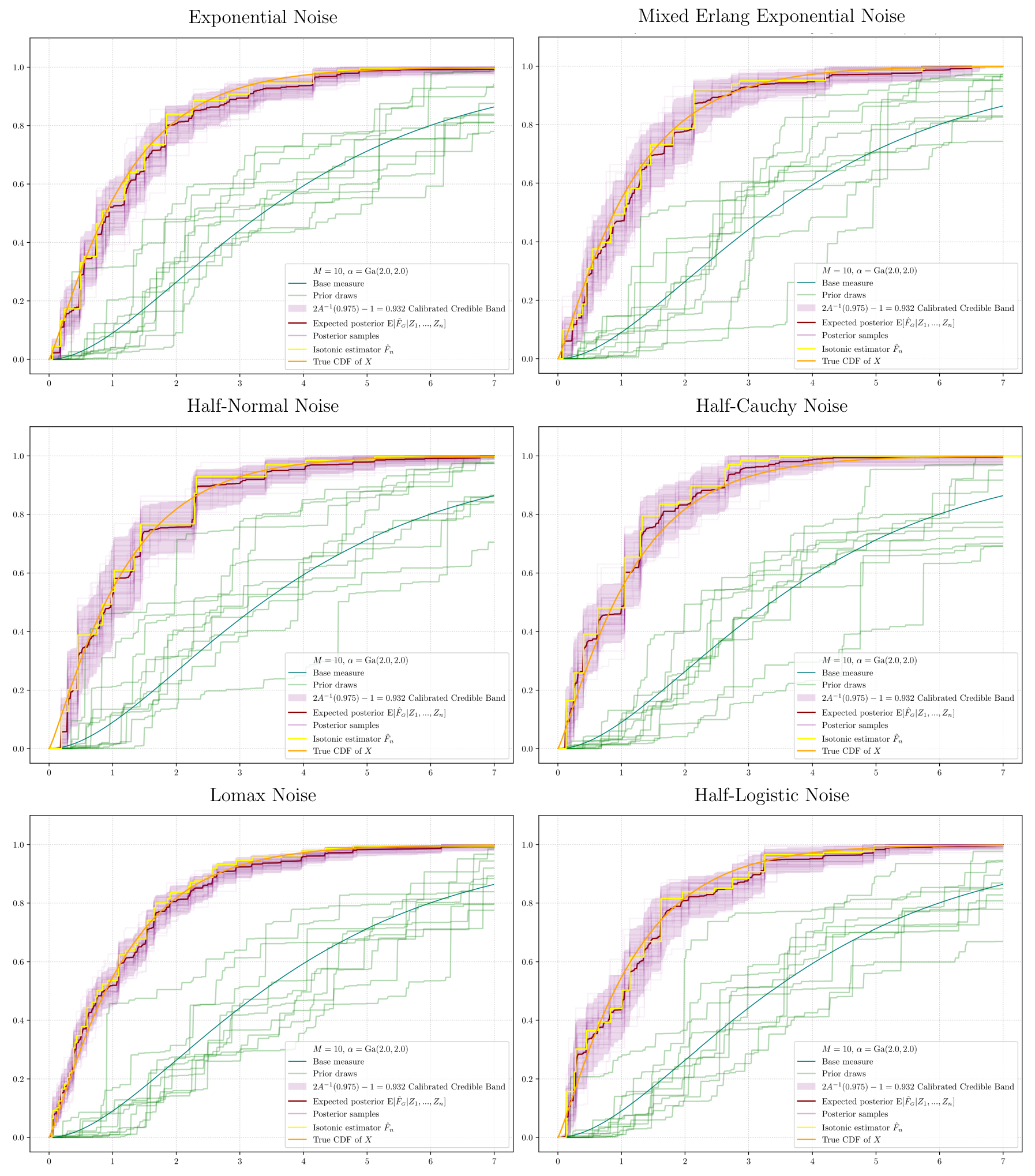}
    \caption{Combined deconvolution plots with calibrated uncertainty quantification for $n=200$. In green, we plot draws of \( \hat{F}_{\scriptscriptstyle{G}} \) based on prior draws from \( G \sim \operatorname{DP}(M \alpha) \) with $\alpha \sim \mathrm{Ga}(2,2)$ and prior precision $M =10$; in pink, the calibrated credible bands from the isotonized posterior; in red, the average of these draws (approximating the posterior mean); in yellow, the (IIE); and in orange \( F_0 \).}
	\label{img:UQ}
\end{figure}

The reasons why we chose this set of distributions are two: on the one hand, for our methodology to work we need $k(0) \neq 0$, with $k$ living on the positive real line. On the other hand, we want to illustrate the performance of the IIP in a variety of settings, including both light-tailed and heavy-tailed noise distributions. Thus we included a good variety of distributions, that go from light-tailed (e.g.\ exponential and half-normal) to heavy-tailed (e.g.\ Lomax with $c=10$, with tails proportional to $y^{-11}$ and which is a special case of Pareto) to very heavy-tailed (e.g.\ half-Cauchy with tails proportional to $y^{-2}$). Because the underlying cdf $F_0$ is always the same, our simulations show that the IIP is able to recover the underlying cdf $F_0$ in all these settings, even when the noise distribution is very heavy-tailed. In particular, in Figure \ref{img:deconvolution_plots} we show the results of the IIP deconvolution methodology compared to the IIE, visualizing the isonotized posterior draws, for the same kernels of Figure \ref{img:resolvent_kernels_grid}. We observe that the Bayesian methodology agrees with the frequentist one. In Figure \ref{img:UQ}, we provide in the same simulation setting the recalibrated credible sets according to the recalibration $2A^{-1}(0.975) -1 = 0.932$. The numerical solutions of the Volterra integral equation are computed using the trapezoidal rule, as described in detail here below. The plots use the following color scheme: green for draws of \( \hat{F}_{\scriptscriptstyle{G}} \) based on prior draws from \( G \sim \operatorname{DP}(M \alpha) \) (parameters \( M \) and \( \alpha \) a pdf specified in the plots not necessarily in the above mentioned set); purple for draws from the isotonized posterior; red for the average of these draws (approximating the posterior mean);  yellow for the Isotonic Inverse Estimator (IIE); and orange for \( F_0 \). 

Overall, these simulations demonstrate that the IIP reliably recovers the underlying distribution $F_0$ across a wide spectrum of noise distributions, ranging from light-tailed to extremely heavy-tailed. Despite substantial variation in the shape and tail behavior of the kernels $k$, the numerical solutions of the associated Volterra equations remain stable, and the resulting posterior draws closely track the frequentist IIE benchmark. The agreement between methods, together with the well-calibrated credible sets, highlights the robustness of the IIP and its practical effectiveness for deconvolution in diverse settings.

\subsubsection*{Numerical Solver of the Volterra Equation}

In order to compute the estimators in case $p$ is not explicitly known, we use a numerical approximation. The presented simulations need the unknown function $p(x)$ solving the specific Volterra integral equation of the first kind:
\begin{equation}
    x \ind(x) = \int_{0}^{x} k(x-t) p(t) \,dt, \, \, \, \, x \geq 0.
    \label{eq:volterra_specific}
\end{equation}
The kernel $k$ is a known function. The solver uses a numerical method based on the trapezoidal rule. A key requirement for this method is $k(0) \neq 0$.

We begin by discretizing the domain $[0, T]$ into $N$ equally spaced points, $x_n = n h$, $p_n := p(x_n)$ and $k_n := k(x_n)$, for $n=0, 1, \dots, N-1$, with step size $h := T/(N-1)$.  At each grid point $x_n$, Equation \eqref{eq:volterra_specific} becomes:
\begin{equation}
    x_n = nh = \int_{0}^{x_n} k(x_n - t) p(t) \,dt.
\end{equation}
We approximate the integral using the composite trapezoidal rule over the points $t_j = jh$ for $j=0, \dots, n$:
\begin{equation}
    nh \approx h \left[ \frac{1}{2} k(x_n - t_0) p(t_0) + \sum_{j=1}^{n-1} k(x_n - t_j) p(t_j) + \frac{1}{2} k(x_n - t_n) p(t_n) \right].
\end{equation}
Let $k(x_n-t_j) := k((n-j)h) = k_{n-j}$ and $k(x_n-t_n) := k(0) = k_0$, then dividing both sides by the step size $h$:
\begin{equation}
    n = \frac{1}{2} k_n p_0 + \sum_{j=1}^{n-1} p_j k_{n-j} + \frac{1}{2} k_0 p_n.
\end{equation}
We can now rearrange this equation to solve for the unknown term $p_n$, assuming all previous values ($p_0, \dots, p_{n-1}$) are known. This gives the recurrence relation for $p_n$:
\begin{equation}
    p_n = \frac{2}{k_0} \left[ n - \frac{1}{2} k_n p_0 - \sum_{j=1}^{n-1} p_j k_{n-j} \right].
    \label{eq:recurrence_specific}
\end{equation}

To start the recurrence, we need the value of $p_0 = p(0)$. This can be found by differentiating the original integral equation using the Leibniz integral rule, which yields the relation $1 = k(0)p(0)$.

The algorithm implemented in the simulation proceeds as follows:
\begin{enumerate}
    \item Discretize the domain $[0, T]$ into $N$ points with step size $h$. Pre-compute the kernel values $k_n$ for $n=0, \dots, N-1$.
    \item Calculate the exact initial value $p_0$: $p_0 = 1/k_0$.
    \item For each step $n = 1, 2, \dots, N-1$, sequentially compute $p_n$ using the recurrence relation from Equation \eqref{eq:recurrence_specific}. 
\end{enumerate}
This procedure yields a stable solution for all examples in Figure \ref{img:resolvent_kernels_grid}.

\section{Conclusion}

This work introduces and evaluates a nonparametric Bayesian method for uncertainty quantification in a specific class of statistical deconvolution problems. To address the challenge of nuisance-parameter-dependent confidence intervals in frequentist methods, we propose the "Isotonic Inverse Posterior." This approach combines a Dirichlet Process prior on the distribution of the observed data with an inverse operator and an isotonic regression step to produce valid estimates for the signal's cumulative distribution function, \(F_0\). The procedure is computationally fast and, following a straightforward recalibration, its credible sets achieve asymptotically correct frequentist coverage without relying on nuisance parameter estimation. As demonstrated through simulations, the method performs effectively across a variety of noise distributions.

\subsection*{Acknowledgments}

The first author gratefully acknowledges the financial support of Aad van der Vaart through his NWO Spinoza Prize. We also thank Aad van der Vaart for valuable feedback on an earlier version of this manuscript.

\newpage
\begin{appendix}
\section{Complementary results}
\textit{Measure-theoretical framework}: in this paper, the sample space is $(\mathbb{R}^+, \mathscr{B}(\mathbb{R}^+))$, and $\mathfrak{M}$ (with Borel $\sigma$-field $\mathscr{M}$ for the weak topology) is the collection of all Borel probability measures on $(\mathbb{R}^+, \mathscr{B}(\mathbb{R}^+))$. The Dirichlet process prior is a probability measure on $\mathfrak{M}$. A prior on the set of probability measures $\mathfrak{M}$ is a probability measure on a sigma-field $\mathscr{M}$ of subsets of $\mathfrak{M}$. Alternatively, it can be viewed as map from some probability space $(\Omega, \mathscr{U}, \mathrm{Pr})$ into $(\mathfrak{M}, \mathscr{M})$. We can think of the hierarchy of the spaces as follows: $(\Omega, \mathscr{U}, \mathrm{Pr}) \rightarrow (\mathfrak{M}, \mathscr{M}) \rightarrow (\mathbb{R}^+, \mathscr{B}(\mathbb{R}^+))$. We put a Dirichlet process prior over $G$ and write $G \sim \Pi = \mathrm{DP}( \alpha)$ for some base measure $\alpha$ and prior precision $|\alpha|$. This means that for any measurable set $B \in \mathscr{M}$ we have $\Pi(B) = \mathrm{Pr} ( G \in B)$. Throughout the paper, we use the notations $\pb$ and $\ex$, which should be understood as the outer probability measure and outer expectation, respectively, whenever issues of measurability in the considered processes arise (see \cite{5} for a rigorous treatment of these topics). 

Furthermore, we use the notion of conditional convergence in distribution. This happens if the bounded Lipschitz distance between the conditional law of Borel measurable $X_n$ given $\mathcal{B}_n$ and a Borel probability measure $L$ tends to zero, where the convergence can be in outer probability. The sequence $X_n$ tends to $L$ conditionally given $\mathcal{B}_n$ in outer probability if
\begin{align}\label{eq: formal definition conditional convergence}
d_{\scriptscriptstyle{ B L}}\left(\mathcal{L}\left(X_n \mid \mathcal{B}_n\right), L\right) :=\sup _{f \in B L_1}\left| \, \mathrm{E}\left(f\left(X_n\right) \mid \mathcal{B}_n\right)-\int f d L\right| \xrightarrow{\mathrm{P}} 0,
\end{align}
where $ B L_1$ is the space of bounded Lipschitz functions with Lipschitz constant 1.

In particular if $Z_1,\ldots,Z_n \overset{\text{i.i.d.}}{\sim} \pb_{\scriptscriptstyle{Z}}$, then $X_n \mid Z_1,\ldots,Z_n \rightsquigarrow N(0,\sigma^2)$ means:
\begin{align}\label{eq: formal definition conditional convergence 1}
	d_{\scriptscriptstyle{ B L}}\left( \pb (X_n \in \cdot \mid Z_1,\ldots,Z_n), N(0,\sigma^2) \right) \xrightarrow{\pb_{\scriptscriptstyle{Z}}} 0.
\end{align}

\begin{lemma}\label{lemma: rewriting1}
Let $x \in (0,T)$, $\hat{F}_{\scriptscriptstyle{G}}$ as in \eqref{eq: def IIP} and $F_0$ true underlying cdf. Then for $a \in \mathbb{R}$ and $n$ sufficiently large
\begin{align*}
n^{1/3} (\hat{F}_{\scriptscriptstyle{G}}(x) - F_0(x)) < a \iff \inf \{ t \in [-x n^{1/3}, (T-x) n^{1/3}] : Z_{\scriptscriptstyle{G}}(t) - a t \, \, \text{is minimal} \} > 0,
\end{align*}
where, for $H_{\scriptscriptstyle{G}}$ as in \eqref{eq: naive bayes},
\[
Z_{\scriptscriptstyle{G}}(t) := n^{2/3} \left( H_{\scriptscriptstyle{G}}(x + t n^{-1/3}) - H_{\scriptscriptstyle{G}}(x) - F_0(x) t n^{-1/3} \right).
\]
\end{lemma}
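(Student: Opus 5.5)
The argument is the standard ``switching relation'' for slopes of greatest convex minorants, followed by an affine change of variables. It has three steps.

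\textbf{Step 1: the switching relation.} For $b\in\mathbb{R}$ define
\[
T_{\scriptscriptstyle{G}}(b):=\inf\{t\in[0,T]: H_{\scriptscriptstyle{G}}(t)-bt \text{ is minimal}\}.
\]
Minimizing $H_{\scriptscriptstyle{G}}(t)-bt$ over $[0,T]$ is the same as minimizing $H_{\scriptscriptstyle{G}}^{\star}(t)-bt$. This holds because the infimum of $h-\ell$ over an affine $\ell$ is unchanged when $h$ is replaced by its GCM, and the smallest minimizer is a touch point of $H_{\scriptscriptstyle{G}}$ and $H_{\scriptscriptstyle{G}}^{\star}$. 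I will take lower semicontinuity of $H_{\scriptscriptstyle{G}}$ from Condition~\ref{cond:2}: $p$ is right continuous, and the indicator $\ind_{[0,x)}$ makes $x\mapsto H_{\scriptscriptstyle{G}}(x)$ left-limited and lower semicontinuous, with jumps of size $p(0)G\{z\}\ge 0$ upward. This guarantees that the infimum is attained on the compact set $[0,T]$.

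Since $H_{\scriptscriptstyle{G}}^{\star}$ is convex, its right derivative $\hat F_{\scriptscriptstyle{G}}$ is nondecreasing. The smallest minimizer of the convex function $t\mapsto H^\star_{\scriptscriptstyle{G}}(t)-bt$ is then $\inf\{t:\hat F_{\scriptscriptstyle{G}}(t)\ge b\}$, with the convention $T$ if this set is empty. Because $\hat F_{\scriptscriptstyle{G}}$ is right continuous and nondecreasing, this yields, for $\tau\in[0,T)$,
\[
\hat F_{\scriptscriptstyle{G}}(\tau)<b \iff T_{\scriptscriptstyle{G}}(b)>\tau .
\]
The relation holds for every real $b$, not only $b\in(0,1)$.

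\textbf{Step 2: substitution.} Apply Step 1 with $\tau=x$ and $b=F_0(x)+an^{-1/3}$, so that
\[
n^{1/3}(\hat F_{\scriptscriptstyle{G}}(x)-F_0(x))<a \iff T_{\scriptscriptstyle{G}}(F_0(x)+an^{-1/3})>x .
\]
Next substitute $s=x+tn^{-1/3}$, which maps $s\in[0,T]$ bijectively and increasingly onto $t\in[-xn^{1/3},(T-x)n^{1/3}]$. A direct computation gives
\[
H_{\scriptscriptstyle{G}}(s)-bs = n^{-2/3}\big(Z_{\scriptscriptstyle{G}}(t)-at\big)+H_{\scriptscriptstyle{G}}(x)-bx .
\]
The second term does not depend on $t$, and the factor $n^{-2/3}$ is positive. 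Hence the set of minimizers in $s$ corresponds exactly to the set of minimizers of $Z_{\scriptscriptstyle{G}}(t)-at$, and the infima correspond under the increasing map. Therefore $T_{\scriptscriptstyle{G}}(b)>x$ if and only if the smallest minimizer in $t$ is $>0$, which is the claim.

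\textbf{Where ``$n$ sufficiently large'' and the main obstacle come in.} The qualifier is only needed to make the rescaled interval nondegenerate and to contain any fixed compact set, for example so that $0$ is interior. I therefore expect it to be essentially cosmetic. The main care point is the boundary and tie convention in Step 1: one must check that the ``smallest minimizer'' definition, together with right-continuity of $\hat F_{\scriptscriptstyle{G}}$, yields the strict inequality $<a$ exactly matching $>0$, rather than a non-strict version. I would handle this by arguing with the convex function $H^\star_{\scriptscriptstyle{G}}-b\,\cdot$ directly: its minimizers form an interval $[t_-,t_+]$ whose left end satisfies $\hat F_{\scriptscriptstyle{G}}(t)<b$ for $t<t_-$ and $\hat F_{\scriptscriptstyle{G}}(t_-)\ge b$. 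I would also verify that $t_-$ coincides with the smallest minimizer of $H_{\scriptscriptstyle{G}}-b\,\cdot$ itself, which requires $H_{\scriptscriptstyle{G}}(t_-)=H_{\scriptscriptstyle{G}}^\star(t_-)$ and is where lower semicontinuity is used.
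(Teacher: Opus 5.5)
Your proposal is correct and follows the paper's route: the switching relation $\hat F_{\scriptscriptstyle{G}}(\tau)<b \iff T_{\scriptscriptstyle{G}}(b)>\tau$ at $b=F_0(x)+an^{-1/3}$, followed by the affine substitution $s=x+tn^{-1/3}$. Your identity $H_{\scriptscriptstyle{G}}(s)-bs=n^{-2/3}(Z_{\scriptscriptstyle{G}}(t)-at)+H_{\scriptscriptstyle{G}}(x)-bx$ is exactly what the paper's chain of equivalences encodes, and you add rigor the paper omits: attainment via lower semicontinuity, the left end of the minimizing interval being a touch point, and validity for all real $b$. One small correction: the paper's ``$n$ sufficiently large'' is presumably there only because it states the switching relation for slopes in $(0,1)$, so $F_0(x)+an^{-1/3}$ must lie in that range. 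It is not about nondegeneracy of the rescaled interval, since $0$ is always interior for $x\in(0,T)$; with your all-$b$ version the qualifier is indeed superfluous.
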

\begin{proof}
Fix $x \in (0,T)$. Then for $a \in \mathbb{R}$ and $n$ sufficiently large
\begin{align*}
n^{1/3} (\hat{F}_{\scriptscriptstyle{G}}(x) - F_0(x)) < a &\iff \hat{F}_{\scriptscriptstyle{G}}(x) < F_0(x) + a n^{-1/3} \\
&\stackrel{\eqref{def: T_G}}{\iff} T_{\scriptscriptstyle{G}}(F_0(x) + a n^{-1/3}) > x \\
&\iff \inf \{ x + t n^{-1/3} \in [0,T] : H_{\scriptscriptstyle{G}}(x + t n^{-1/3}) - H_{\scriptscriptstyle{G}}(x) \\
& \qquad \quad \quad - F_0(x) t n^{-1/3} - a t n^{-2/3} \, \, \text{is minimal} \} > x \\
&\iff \inf \{ t \in [-x n^{1/3}, (T-x) n^{1/3}] : n^{2/3} (H_{\scriptscriptstyle{G}}(x + t n^{-1/3}) \\
& \qquad \quad \quad - H_{\scriptscriptstyle{G}}(x) - F_0(x) t n^{-1/3}) - a t \, \, \text{is minimal} \} > 0.
\end{align*}
\end{proof}

\begin{lemma}\label{lemma: simplifications}
Let $f_t^n(z) := n^{1/3} (p(x+n^{-1/3}t-z)\ind_{[0, x+n^{-1/3}t)}(z) - p(x-z)\ind_{[0, x)}(z) )$
then, for independent $Q \sim \mathrm{DP}(\alpha)$, $\mathbb{B}_n \sim \mathrm{DP}(n \mathbb{G}_n)$ and $V_n \sim \mathrm{Be}(|\alpha|, n)$. As $n \to \infty$, under $G_0$ in probability:
\begin{align*}
\pb (n^{1/3} |V_n (Q-\mathbb{G}_n)f_t^n| > \varepsilon \mid Z_1, \dots, Z_n) & \xrightarrow{G_0} 0. \numberthis \label{eq: V_n term 1}
\end{align*}
and
\begin{align*}
\pb (n^{1/3} |V_n (\mathbb{B}_n-\mathbb{G}_n)f_t^n| > \varepsilon \mid Z_1, \dots, Z_n) \xrightarrow{G_0} 0. \numberthis \label{eq: V_n term 2}
\end{align*}
\end{lemma}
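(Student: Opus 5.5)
The plan is to split each term into the scalar factor $n^{1/3}V_n$ and a random linear functional applied to $f_t^n$. We show the scalar is $O_p(n^{-2/3})$ and the functional is $O_p(1)$ (conditionally, in $G_0$-probability), uniformly for $t$ in a compact $K$. Since $V_n\sim\mathrm{Be}(|\alpha|,n)$ is independent of the data, $\ex V_n = |\alpha|/(|\alpha|+n)$, so Markov's inequality gives $n^{1/3}V_n = O_p(n^{-2/3})$, with a bound free of $Z_1,\dots,Z_n$. It therefore suffices to show that $|(Q-\mathbb{G}_n)f_t^n|$ and $|(\mathbb{B}_n-\mathbb{G}_n)f_t^n|$ are $o_p(n^{2/3})$ conditionally. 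A bound of $O_p(n^{1/3})$ would already suffice.

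\textbf{A uniform bound on $f_t^n$.} By Condition~\ref{cond:2}, $p$ is bounded on $[0,T+1]$, say by $C_p$, since it is Hölder continuous on $(0,\infty)$ with finite right limit $p(0)$. Hence $\|f_t^n\|_\infty \le 2C_p n^{1/3}$ for all $t\in K$ and all $n$ large enough that $x+n^{-1/3}t\le T+1$.

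\textbf{First term.} Both $Q$ and $\mathbb{G}_n$ are probability measures, so $|(Q-\mathbb{G}_n)f_t^n| \le 2\|f_t^n\|_\infty \le 4C_pn^{1/3}$ deterministically. Consequently $n^{1/3}V_n|(Q-\mathbb{G}_n)f_t^n| \le 4C_p n^{2/3}V_n$, and
\[
\pb\bigl(4C_pn^{2/3}V_n>\varepsilon\bigr)\le 4C_pn^{2/3}|\alpha|/(\varepsilon n)\to0 .
\]
This bound holds uniformly in $t\in K$ and for every realization of the data, which gives \eqref{eq: V_n term 1}. The assumption that $\alpha$ has bounded density is not even needed here.

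\textbf{Second term.} The same crude argument works, because $\mathbb{B}_n$ is also a probability measure. So \eqref{eq: V_n term 2} follows verbatim, with $\sup_{t\in K}$ included. If one prefers a sharper statement, the conditional second moment gives a stronger one. Under $\mathbb{B}_n\sim \mathrm{DP}(n\mathbb{G}_n)$ one has $\mathrm{Var}(\mathbb{B}_n f\mid Z)=(\mathbb{G}_n f^2-(\mathbb{G}_n f)^2)/(n+1)$. Moreover $\mathbb{G}_n (f_t^n)^2 = O_p(n^{1/3})$: the indicator part is supported on an interval of length $n^{-1/3}|t|$, and the Hölder part contributes $O(n^{2/3}n^{-2\gamma/3})$. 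Together these show $(\mathbb{B}_n-\mathbb{G}_n)f_t^n=O_p(n^{-1/3})$ conditionally. This refinement is not needed for the lemma.

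\textbf{Main obstacle.} The only delicate point is making the statement uniform over $t\in K$ and handling measurability. The deterministic sup-norm bound avoids both issues, since the bound does not depend on $t$ or on the data. I would therefore present the proof through the crude bound $n^{2/3}V_n\to0$ in probability.
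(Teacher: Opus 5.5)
Your proof is correct and is essentially the paper's argument. Both use an $O(n^{1/3})$ sup-norm bound on $f_t^n$, which comes from boundedness of $p$ on bounded intervals under Condition~\ref{cond:2}, together with Markov's inequality and $\ex V_n=|\alpha|/(|\alpha|+n)$; this gives a bound of order $n^{2/3}/n\to 0$ for both terms. The only cosmetic difference is that you bound $|(Q-\mathbb{G}_n)f_t^n|\le 2\|f_t^n\|_\infty$ deterministically, whereas the paper applies Markov to the product via independence. Your version also makes the uniformity over $t\in K$ explicit and shows that the bounded-density assumption on $\alpha$ is not needed for this lemma.
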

\begin{proof}
	Using the fact that $p$ is Hölder continuous on $(0, \infty)$ of degree $\gamma > 1/2$:
	\begin{align*}
	|f_t^n(z)| &= n^{1/3} |p(x+n^{-1/3}t-z)\ind_{[0, x)}(z) - (x-z)\ind_{[0, x)}(z)| \\
	&\quad \quad + |p(x+n^{-1/3}t-z) \ind_{[x, x+n^{-1/3}t)}(z)| \\
	&\lesssim n^{1/3} |(x+n^{-1/3}t-z) - (x-z)|^\gamma \ind_{[0, x)}(z) \\
	&\quad \quad  + n^{1/3} \sup_{z \in [x, x+n^{-1/3}t]} |p(x+n^{-1/3}t-z)| \\
	&\lesssim n^{1/3} |n^{-1/3}t \ind_{[0, x)}(z)|^\gamma + n^{1/3} \sup_{z \in [x, x+n^{-1/3}t]} |x+n^{-1/3}t-z|^\gamma \\
	&\lesssim n^{1/3} p(0) + n^{(1-\gamma)/3} |t|^\gamma.
	\end{align*}
	Note that $\ex V_n = |\alpha|/(|\alpha|+n)$, thus $\forall \, \varepsilon > 0$ by independence, $\gamma > 1/2$ as $n \rightarrow \infty$
	\begin{align*}
	\pb (n^{1/3} |V_n (Q-\mathbb{G}_n)f_t^n| > \varepsilon \mid Z_1, \dots, Z_n) & \lesssim n^{1/3} \varepsilon^{-1} \ex V_n \, \ex [ (Q+\mathbb{G}_n) |f_t^n| \mid Z_1, \dots, Z_n] \\
	&\lesssim n^{(2-\gamma)/3} \varepsilon^{-1} \frac{|\alpha|}{|\alpha|+n} |t|^\gamma \to 0. 
	\end{align*}
	Similarly $\pb (n^{1/3} |V_n (\mathbb{B}_n-\mathbb{G}_n)f_t^n| > \varepsilon \mid Z_1, \dots, Z_n) \xrightarrow{G_0} 0$.
\end{proof}

\begin{lemma}\label{lemma: further simplification}
For $W_n^*$ as in \eqref{eq: W_n^* simplified}, in probability under $G_0$, as $n \to \infty$:
\begin{align}\label{eq: W_n^* simplified 1}
	W_n^*(t) = \mathbb{W}_n^*(t) + o_p(1) .
\end{align}
where for \( \varepsilon_i \overset{\text{i.i.d.}}{\sim} \mathrm{Exp}(1) \), the process $\mathbb{W}_n^*$ is defined as:
\begin{align}\label{eq: mathbbW star process 1}
    \mathbb{W}_n^*(t) = \frac{1}{\sqrt{n}} \sum_{i=1}^n \frac{(\varepsilon_i-1)}{\sqrt[6]{n}} \tilde{f}_t^n(Z_i).
\end{align}
and $\tilde{f}_t^n(z) := n^{1/3} p(0) \ind_{[x, x+n^{-1/3}t)}(z)$.
\end{lemma}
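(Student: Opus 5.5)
We need to show that, uniformly for $t$ in compacta, $W_n^*(t) = n^{-2/3}\sum_{i=1}^n (W_{ni}-1) f_t^n(Z_i)$ is $\mathbb{W}_n^*(t)$ plus a term that is $o_p(1)$ conditionally on the data, in probability under $G_0$. The plan is to make three successive replacements, each costing only $o_p(1)$.

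\textbf{Step 1: remove the normalization $\bar\varepsilon_n$.} Write $W_{ni}-1 = (\varepsilon_i - \bar\varepsilon_n)/\bar\varepsilon_n$. This gives
\[
W_n^*(t) = \bar\varepsilon_n^{-1}\, n^{-2/3}\sum_{i=1}^n(\varepsilon_i-1)f_t^n(Z_i) \;-\; \bar\varepsilon_n^{-1}(\bar\varepsilon_n-1)\, n^{1/3}\, \mathbb{G}_n f_t^n .
\]
Since $\bar\varepsilon_n - 1 = O_p(n^{-1/2})$ and $\bar\varepsilon_n\to 1$, it suffices to establish two facts. First, $n^{1/3}\mathbb{G}_n f_t^n = o_p(n^{1/2})$ uniformly in $t\in K$. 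Second, the unnormalized sum is $O_p(1)$; the factor $\bar\varepsilon_n^{-1}\to1$ is then absorbed by Slutsky.

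\textbf{Step 2: split $f_t^n$ using the decomposition $p = \tilde p + p(0)\ind_{[0,\infty)}$.} For $t>0$ this gives
\[
f_t^n(z) = n^{1/3}\bigl(\tilde p(x+n^{-1/3}t-z)-\tilde p(x-z)\bigr)\ind_{[0,x)}(z) + n^{1/3}p(x+n^{-1/3}t-z)\ind_{[x,x+n^{-1/3}t)}(z),
\]
with the obvious analogue for $t<0$. The first piece is bounded by $n^{(1-\gamma)/3}|t|^\gamma$ by Hölder continuity. In the second piece, replace $p(x+n^{-1/3}t-z)$ by $p(0)$; the error is again at most $n^{1/3}(n^{-1/3}|t|)^\gamma$, and it lives on an interval whose $\mathbb{G}_n$-mass is $O_p(n^{-1/3})$. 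The second piece then becomes exactly $\tilde f_t^n$.

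For the first (Hölder) remainder $r_t^n$, consider $n^{-2/3}\sum_i(\varepsilon_i-1)r_t^n(Z_i)$. Conditionally on the data it has mean zero and variance
\[
n^{-4/3}\sum_i r_t^n(Z_i)^2 \le n^{-4/3}\, n\, n^{2(1-\gamma)/3}|t|^{2\gamma} = n^{(1-2\gamma)/3}|t|^{2\gamma} \to 0,
\]
because $\gamma>1/2$. The same variance computation applied to the replacement error yields
\[
n^{-4/3}\cdot n\cdot O_p(n^{-1/3})\cdot n^{2(1-\gamma)/3} \to 0 .
\]
This mirrors the treatment of $R_n$ in \eqref{eq: residual term convergence}, with the bootstrap weights playing the role of the empirical fluctuation. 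Step 1's condition $n^{1/3}\mathbb{G}_n f_t^n = o_p(n^{1/2})$ follows from the same pointwise bounds: it is $O_p(n^{1/3}\cdot n^{(1-\gamma)/3}) + O_p(1)$, which is $o_p(n^{1/2})$.

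\textbf{Step 3 (main obstacle): upgrade pointwise to uniform over $t\in K$.} The plan is to control the remainder processes by a maximal inequality. The remainder processes are indexed by $t$ and are sums of independent centered terms given the data. For the indicator part, the functions $\ind_{[x,x+n^{-1/3}t)}$ form a VC class with envelope $n^{1/3}p(0)\ind_{[x-n^{-1/3}M,\, x+n^{-1/3}M]}$, where $M$ bounds $K$. For the Hölder part, the increments in $t$ are Lipschitz-like in $|t-s|^\gamma$, so a chaining (or bracketing) bound with $\gamma>1/2$ gives a finite entropy integral. Applying the multiplier inequality of \cite{5} (Section 2.9) conditionally then shows that $\sup_{t\in K}$ of each remainder is $o_p(1)$. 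The same argument shows that the unnormalized sum in Step 1 is $O_p(1)$ uniformly. Combining the three steps yields \eqref{eq: W_n^* simplified 1}.
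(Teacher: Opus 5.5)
Your proposal is correct and follows essentially the paper's route. You strip the normalization using $\bar\varepsilon_n-1=O_p(n^{-1/2})$, split $f_t^n=\tilde f_t^n+r_t^n$ via $\tilde p=p-p(0)\ind_{[0,\infty)}$ (your H\"older piece plus replacement error is exactly the paper's $r_t^n$), use $\gamma>1/2$ to get conditional variance $O(n^{(1-2\gamma)/3})$ for the remainder, and then upgrade to uniformity in $t$. The paper does that last step concretely, with a grid of mesh $|t_i-t_{i-1}|^\gamma=\delta n^{-(2-\gamma)/3}$, $\psi_1$-Bernstein bounds and Lemma 2.2.13 of \cite{5}, rather than an entropy integral. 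One small slip: $n^{1/3}\mathbb{G}_n\tilde f_t^n=n^{2/3}p(0)\,\mathbb{G}_n[x,x+n^{-1/3}t)$ is $O_p(n^{1/3})$, not $O_p(1)$, but this is still $o_p(n^{1/2})$, so your Step 1 goes through unchanged.
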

\begin{proof}
First recall the decomposition:
\begin{align*}
	W_n^*(t) = n^{1/3}(\mathbb{B}_n - \mathbb{G}_n)f_t^n &= \frac{1}{n^{2/3}} \sum_{i=1}^n (W_{ni}-1) f_t^n(Z_i),
\end{align*}
where $W_{ni} = \frac{\varepsilon_i}{\bar{\varepsilon}_n}$. Because $|\bar{\varepsilon}_n - 1| = O_p(1/\sqrt{n})$, for almost every sequence $Z_1, \dots, Z_n$ conclude that uniformly in $t$ in compacta:
\begin{align*}
\frac{1}{n^{2/3}} \sum_{i=1}^n (W_{ni}-1) f_t^n(Z_i) = \frac{1}{\sqrt{n}} \sum_{i=1}^n \frac{(\varepsilon_i-1)}{\sqrt[6]{n}} f_t^n(Z_i) + o_p(1).
\end{align*}
Now consider the decomposition:
\[
f_t^n(z) = \tilde{f}_t^n(z) + r_t^n(z)
\]
where for $\tilde{p} = p - p(0) \ind_{[0,\infty)}$:
\begin{align*}
r_t^n(z) &:= n^{1/3} (\tilde{p}(x + n^{-1/3}t - z) - \tilde{p}(x - z)), \\
\tilde{f}_t^n(z) &:= n^{1/3} p(0) \ind_{[x, x+n^{-1/3}t)}(z).
\end{align*}
We obtain the claim that uniformly in $t$ over compacta:
\begin{align}
	W_n^*(t) = \mathbb{W}_n^*(t) + o_p(1),
\end{align}
as soon as we show that for all compact $K \subset \mathbb{R}$ and $\forall \, \varepsilon > 0$:
\begin{align}\label{eq: representation without remainders}
	\pb \left( \sup_{t \in K} \left| \frac{1}{\sqrt{n}} \sum_{i=1}^n \frac{(\varepsilon_i-1)}{\sqrt[6]{n}} r_t^n(Z_i) \right| > \varepsilon \mid Z_1, \dots, Z_n \right) \xrightarrow{G_0} 0.
\end{align}

For $\tilde{p} = p - p(0)\ind_{[0,\infty)}$, $\tilde{p}$ is $\gamma$-Hölder continuous of
degree $\gamma > 1/2$. Define:
	\[ R_n^*(t) := \frac{1}{n^{1/3}} \sum_{i=1}^n (\varepsilon_i - 1) \left( \tilde{p}(x+n^{-1/3}t-Z_i) - \tilde{p}(x-Z_i) \right). \]
The claim is obtained if we show for any $K \in (0,\infty)$, $\sup_{|t|\le K} |R_n^*(t)| \xrightarrow{G_0} 0$.

Let $0 = t_0 < t_1 < \dots < t_{K_n} = K$ and $t_{-i} = -t_i$, $i=1, \dots, K_n$. Now note:
\begin{align*}
	\sup_{|t|\le K} |R_n^*(t)| &= \max_{-K_n+1 \le i \le K_n} \sup_{t \in [t_{i-1}, t_i]} |R_n^*(t)| \\
	&\le \max_{-K_n+1 \le i \le K_n} \left( |R_n^*(t_i)| + \sup_{t \in [t_{i-1}, t_i]} |R_n^*(t) - R_n^*(t_i)| \right).
\end{align*}
Using Markov's inequality, we have:
\begin{align*}
&\varepsilon \, \pb \left( \sup_{|t|\le K} |R_n^*(t)| > \varepsilon \right) \le \ex \left( \sup_{|t|\le K} |R_n^*(t)| \right) \\
&\quad \quad \le \ex \left( \max_{-K_n+1 \le i \le K_n} |R_n^*(t_i)| \right) + \ex \left( \max_{-K_n+1 \le i \le K_n} \sup_{t \in [t_{i-1}, t_i]} |R_n^*(t) - R_n^*(t_i)| \right).
\end{align*}
For each $t \in [t_{i-1}, t_i]$ and $\mathbb{P}_n = \frac{1}{n} \sum_{i=1}^n \delta_{\varepsilon_i}$
\begin{align*}
&|R_n^*(t) - R_n^*(t_i)| \\
& = n^{2/3} \left| \int_0^\infty \int_0^\infty \left( \tilde{p}(x+n^{-1/3}t-z) - \tilde{p}(x+n^{-1/3}t_i-z) \right) (\varepsilon-1) \, d(\mathbb{G}_n, \mathbb{P}_n)(z,\varepsilon) \right| \\
& \le 2 n^{2/3} L n^{-\gamma/3} |t-t_i|^\gamma \left( 1 + \frac{1}{n} \sum_{j=1}^n \varepsilon_j \right) = 2 n^{(2-\gamma)/3} L |t-t_i|^\gamma \left( 1 + \frac{1}{n} \sum_{j=1}^n \varepsilon_j \right).
\end{align*}
Take a grid of $t_i$'s such that $|t_i - t_{i-1}|^\gamma = \delta n^{-(2-\gamma)/3}$
then
\[ \ex \left( \sup_{-K_n+1 \le i \le K_n} \sup_{t \in [t_{i-1}, t_i]} |R_n^*(t) - R_n^*(t_i)| \right) \le 4 L \delta,\]
(which we can make small by taking $\delta$ small enough; note also that $K_n = O(n^{(2-\gamma)/3\gamma})$). We can rewrite $R_n^*(t) = \sum_{i=1}^n Y_i$ where:
\[ Y_i := n^{-1/3} (\varepsilon_i - 1) \left( \tilde{p}(x+n^{-1/3}t-Z_i) - \tilde{p}(x-Z_i) \right). \]
Note that $Y_i$ has expectation $0$ and:
\[ |Y_i| \le n^{-1/3} (L n^{-\gamma/3} |t|^\gamma + C n^{-1/3} |t|) |\varepsilon_i-1| \]
\[ \le (C' n^{-(1+\gamma)/3} + C K n^{-2/3}) |\varepsilon_i-1| \]
and because the exponential random variables
have finite Orlicz norm $\|\varepsilon_i\|_{\psi_1} < T$, $0 < T < \infty$ (where the Orlicz norm is defined as: $\|X\|_{\psi_1}
:= \inf \{ c > 0 : \mathrm{E} [ \exp ( |X| / c ) ] \le 2 \}$),
then for some finite constant $K'$:
\[ \|Y_i\|_{\psi_1} \le K' (C' n^{-(1+\gamma)/3} + C K n^{-2/3}). \]
Therefore,
\[ \|Y_i\|_{\psi_1}^2 \le 2(K'C')^2 n^{-(2+2\gamma)/3} + 2(KCK'C')^2 n^{-4/3} \le 2(K'C')^2 (1+(KC)^2) n^{-(2+2\gamma)/3}. \]
By Example 2.2.12 and Lemma 2.2.10 in \cite{5}, for some constants $C_{1,2,3}$:
\begin{align*}
\pb (|R_n^*(t)| > x) &= \pb \left(\left|\sum_{i=1}^n Y_i\right| > x\right) \\
&\le 2 \exp\left\{-\frac{1}{2} \frac{x^2}{C_1 n^{(1-2\gamma)/3} + (C_2 n^{-(1+\gamma)/3} + C_3 n^{-2/3}) x}\right\}.
\end{align*}
By applying Lemma 2.2.13 in \cite{5} to $R_n^*(t_{-K_n+1}), \dots, R_n^*(t_{K_n})$
and $\|\cdot\|_1 \lesssim \|\cdot\|_{\psi_1}$:
\begin{align}
\ex \left(\max_{-K_n+1 \le i \le K_n} |R_n^*(t_i)| \right) &\lesssim C \left( (C_2 n^{-(1+\gamma)/3} + C_3 n^{-2/3}) \log(1+2K_n) \right. \\
& \quad \left. + \,  C_1^{1/2} n^{(1-2\gamma)/6} \sqrt{\log(1+2K_n)} \right). 
\end{align}
Because $K_n = O(n^{(2-\gamma)/3\gamma})$, the result follows.
\end{proof}

\begin{lemma}\label{lemma: conditional convergence W star}
For every compact $K \subset \mathbb{R}$ as $n \to \infty$:
\begin{align}\label{eq: marginal convergence}
	\mathbb{W}_n^*\mid Z_1, \dots, Z_n \rightsquigarrow \sqrt{\frac{g_0(x)}{k(0)}} \mathbb{W}_2 \quad \text{in } \, \, \ell^\infty(K).
\end{align}
where $\mathbb{W}_n^*$ is as in \eqref{eq: mathbbW star process 1} and $\mathbb{W}_2$ is a two sided Brownian Motion originating in $0$.
\end{lemma}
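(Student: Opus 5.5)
The plan is to exploit the explicit form of $\mathbb{W}_n^*$. Since $\tilde f_t^n(z)=n^{1/3}p(0)\ind_{[x,x+n^{-1/3}t)}(z)$, we have
\[
\mathbb{W}_n^*(t)=\frac{p(0)}{n^{1/3}}\sum_{i=1}^n(\varepsilon_i-1)\ind\{Z_i\in I_n(t)\},
\]
where $I_n(t)$ is the interval between $x$ and $x+n^{-1/3}t$, with the obvious sign convention for $t<0$. Conditionally on $Z_1,\dots,Z_n$, the only randomness is in the i.i.d.\ centred weights $\varepsilon_i-1$, which have unit variance and are independent of the data. Hence, given the data, $(\mathbb{W}_n^*(t))_{t\ge0}$ and $(\mathbb{W}_n^*(-t))_{t\ge0}$ are independent processes with independent increments, since disjoint intervals contain disjoint sets of $Z_i$.

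The conditional covariance is
\[
\ex[\mathbb{W}_n^*(s)\mathbb{W}_n^*(t)\mid Z_{1:n}]=p(0)^2\,n^{-2/3}N_n(s,t),
\]
where $N_n(s,t)$ is the number of $Z_i$ in $I_n(s)\cap I_n(t)$. The first step is to show that
\[
\sup_{t\in K}\big|n^{-2/3}N_n(0,t)-g_0(x)|t|\big|\to0
\]
in $G_0$-probability.
\begin{itemize}
\item For fixed $t$ this follows from a mean and variance computation. The mean equals $n^{1/3}(G_0(x+n^{-1/3}t)-G_0(x))\to g_0(x)|t|$ by continuity of $g_0$ at $x$, and the variance is $O(n^{-2/3}|t|)$.
\item Uniformity over $t\in K$ then follows from monotonicity of $t\mapsto N_n(0,t)$ on each half-line, by the usual Glivenko–Cantelli bracketing argument.
\end{itemize}
Since $p(0)=1/k(0)$ (differentiate \eqref{eq:1}), this gives limiting covariance $\frac{g_0(x)}{k(0)^2}\min(|s|,|t|)$ for $s,t$ of the same sign and $0$ otherwise. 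That is the covariance of $\frac{\sqrt{g_0(x)}}{k(0)}\mathbb{W}_2$, the normalisation used in \eqref{eq: W_G convergence}; the constant written in the lemma should be read this way.

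The second step is conditional finite-dimensional convergence. For fixed $t_1,\dots,t_m$, apply the Lindeberg–Feller CLT, conditionally on the data, to the triangular array $n^{-1/3}p(0)(\varepsilon_i-1)\big(\ind\{Z_i\in I_n(t_j)\}\big)_{j}$. The conditional covariance converges in probability by the first step. For the Lindeberg condition, each summand is bounded by $n^{-1/3}p(0)|\varepsilon_i-1|$, and at most $O_p(n^{2/3})$ of them are nonzero. So the Lindeberg sum is at most $O_p(1)\,\ex[(\varepsilon-1)^2\ind\{|\varepsilon-1|>\eta n^{1/3}\}]\to0$. A subsequence argument, extracting almost surely convergent subsequences of the covariances, turns this into convergence in $d_{BL}$ in $G_0$-probability as in \eqref{eq: formal definition conditional convergence 1}.

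The third step, tightness (conditional asymptotic equicontinuity in $\ell^\infty(K)$), is the main obstacle, because $\mathbb{W}_n^*$ is a pure-jump process while the limit is continuous. I would handle each half-line separately and use the Billingsley-type fluctuation criterion for cadlag processes with independent increments. For $s<t<u$, conditional independence of increments gives
\[
\ex\big[|\mathbb{W}_n^*(t)-\mathbb{W}_n^*(s)|^2|\mathbb{W}_n^*(u)-\mathbb{W}_n^*(t)|^2\mid Z_{1:n}\big]=p(0)^4\,n^{-4/3}N_n(s,t)N_n(t,u)\le p(0)^4\big(\mu_n(s,u)\big)^2,
\]
where $\mu_n(s,u):=n^{-2/3}N_n(s,u)$ is a random measure. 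By the first step, $\mu_n$ converges uniformly on $K$ to $g_0(x)\,\mathrm{Leb}$. So on an event of probability tending to one, $\mu_n(s,u)\le 2g_0(x)(u-s)$ whenever $u-s$ exceeds a vanishing threshold $\delta_n$, say $\delta_n$ slowly going to $0$. The criterion then controls oscillations at scales above $\delta_n$. Oscillations within windows of length $\delta_n$ I would bound directly. On such a window the process moves by at most $n^{-1/3}p(0)\sum|\varepsilon_i-1|$ over the $O_p(n^{2/3}\delta_n)$ points it contains. Taking maxima over the $O(1/\delta_n)$ windows, with exponential tail bounds for the $\varepsilon_i$ (Bernstein as in Lemma 2.2.10 of \cite{5}), makes this $o_p(1)$.

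Combining the three steps, the cadlag process converges in the Skorokhod topology to a continuous limit, and since the limit is continuous this is equivalent to convergence in $\ell^\infty(K)$. This gives \eqref{eq: marginal convergence}, conditionally in $G_0$-probability.
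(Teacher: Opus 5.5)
\textbf{Steps 1 and 2.} These are sound and run parallel to the paper's proof. Both use a conditional Lindeberg--Feller argument for the finite-dimensional laws. The paper gets the variance limit from a triangular-array weak law; you get it from a mean/variance computation plus monotone bracketing. Your correction of the constant is also right: the paper's own variance computation gives $g_0(x)t/k(0)^2$, so the limit is $\frac{\sqrt{g_0(x)}}{k(0)}\mathbb{W}_2$.

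\textbf{The gap is in Step 3: the two scales you combine are incompatible as you justify them.} Your within-window bound is a total-variation bound. On a window containing $m$ of the $Z_i$, the quantity $n^{-1/3}p(0)\sum_i|\varepsilon_i-1|$ has conditional mean $(2/e)\,p(0)\,n^{-1/3}m$. With $m\asymp g_0(x)n^{2/3}\delta_n$ this is of order $n^{1/3}\delta_n$. It is therefore $o_p(1)$ only if $\delta_n=o(n^{-1/3})$, and with ``$\delta_n$ slowly going to $0$'' it diverges. On the other hand, you obtain the ratio bound $\mu_n(s,u)\le 2g_0(x)(u-s)$ for $u-s\ge\delta_n$ from the uniform convergence in Step 1. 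That only controls the absolute error $e_n=\sup_{t\in K}|n^{-2/3}N_n(0,t)-g_0(x)|t||$. Its stochastic part is exactly of order $n^{-1/3}$, because $N_n(0,t)$ is binomial with standard deviation of order $n^{1/3}$. So this derivation forces $\delta_n\gtrsim n^{-1/3}$. No $\delta_n$ meets both requirements as stated, and the tightness argument does not close.

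\textbf{Three ways to repair it.}
\begin{enumerate}
\item Take $n^{-2/3}\log n\ll\delta_n\ll n^{-1/3}$, for instance $\delta_n=n^{-1/2}$. Then prove the ratio bound directly, not from Step 1: use a multiplicative Chernoff bound for the binomial counts in the $O(1/\delta_n)$ grid cells of mesh $\delta_n/2$, plus a union bound. This works because each count has mean of order $n^{2/3}\delta_n\gg\log n$. The total-variation bound is then $O_p(n^{1/3}\delta_n)=o_p(1)$.
\item Keep $\delta_n\to0$ slowly, so that Step 1 does give the ratio bound. Replace the total-variation estimate by a maximal inequality for the centred partial sums of the $\varepsilon_i-1$ inside each window; Bernstein together with Ottaviani or Doob gives $O_p(\sqrt{\delta_n\log(1/\delta_n)})$ uniformly over windows.
\item Bypass the fluctuation criterion entirely. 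For $t\ge0$, order the $Z_i$ in $[x,x+n^{-1/3}t)$ and write $\mathbb{W}_n^*(t)=p(0)\,n^{-1/3}S_{N_n(0,t)}$. Here $S_k$ is a random walk whose steps are the correspondingly reordered $\varepsilon_i-1$. Since the ordering is data-measurable, these steps are still i.i.d.\ given the data. Donsker's theorem for $S$, combined with your Step 1 convergence of the time change $n^{-2/3}N_n(0,\cdot)\to g_0(x)\,\cdot$, gives conditional convergence in $\ell^\infty(K)$ at once.
\end{enumerate}
For comparison, the paper handles this step quite differently: it controls the modulus of continuity through a uniform-entropy maximal inequality for VC classes (Theorem 2.14.1 of \cite{5}).
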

\begin{proof}
First, for $t \in \mathbb{R}$, we give the proof of the marginal convergence 
\begin{align}\label{eq: first marginal convergence}
    \mathbb{W}_n^*(t) \mid Z_1, \ldots, Z_n \rightsquigarrow \mathbb{W}_2(t)
\end{align}
where $\mathbb{W}_2$ is a two sided Brownian Motion originating in $0$. After that, we will show conditional asymptotic equicontinuity of the process $\mathbb{W}_n^*$, proving the claim \eqref{eq: marginal convergence}.

Without loss of generality, let $t \ne 0$ in $\mathbb{R}$. We apply Theorem 6.16 in \cite{53} with	$\xi_{ni}$ equal to:
\[ \xi_{ni} := \frac{(\varepsilon_i - 1) \tilde{f}_t^n (Z_i)}{n^{2/3}} = \frac{(\varepsilon_i - 1)}{n^{1/3}} p(0) \ind_{[x, x+n^{-1/3}t]}(Z_i). \]
Therefore we need to show $\forall \, \delta > 0$:
\begin{align*}
&\sum_{i=1}^n \pb (|\varepsilon_i - 1| |\tilde{f}_t^n(Z_i)| > \delta n^{2/3} \mid Z_1, \dots, Z_n) \xrightarrow{G_0} 0, \numberthis \label{eq: first} \\
&\frac{1}{n^{4/3}} \sum_{i=1}^n \ex \left( (\varepsilon_i - 1) \tilde{f}_t^n(Z_i) \ind_{\{|\varepsilon_i-1||\tilde{f}_t^n(Z_i)| \ge \delta n^{2/3}\}} \mid Z_1, \dots, Z_n \right) \xrightarrow{G_0} 0, \quad \text{and} \numberthis \label{eq: second} \\
& \frac{1}{n^{4/3}} \sum_{i=1}^n \text{Var} \left( (\varepsilon_i - 1) \tilde{f}_t^n(Z_i) \ind_{\{|\varepsilon_i-1||\tilde{f}_t^n(Z_i)| < \delta n^{2/3}\}} \mid Z_1, \dots, Z_n \right) \xrightarrow{G_0} \frac{g_0(x)}{k(0)^2} t. \numberthis \label{eq: third} 
\end{align*}
For \eqref{eq: first}, use that $|\tilde{f}_t^n(Z_i)| \le p(0) n^{1/3}$, for all $n$ big enough, unconditionally:
\begin{align*} 
	\sum_{i=1}^n \pb (|\varepsilon_i - 1| |\tilde{f}_t^n(Z_i)| > \delta n^{2/3}) &\leq n \, \pb (|\varepsilon_i - 1| > \delta n^{1/3} / p(0)) \\ &= n \int_{\lambda > 1 + |t|^{-1} \delta n^{1/3}} e^{-\lambda} d\lambda = n e^{-1-|t|^{-1}\delta n^{1/3}} \to 0. 
\end{align*}
Now consider \eqref{eq: second} and recall $Z \sim g_0$ and $\varepsilon \sim \mathrm{Exp}(1)$ in what follows. Because $\ex \tilde{f}_t^n(Z) = O(1)$ since, as $n \to \infty$:
\begin{align*}
	\ex \tilde{f}_t^n(Z) &= n^{1/3} \int p(0) \ind_{[x, x+n^{-1/3}t]}(z) g_0(z) \, dz\\
	&= p(0) n^{1/3} \, \pb (Z \in [x, x+n^{-1/3}t]) \to t p(0) g_0(x),
\end{align*}
and $\varepsilon \indep Z$, we conclude that:
\begin{align*} 
	&\ex \left| \frac{1}{n^{2/3}} \sum_{i=1}^n \ex \left( (\varepsilon_i - 1) \tilde{f}_t^n(Z_i) \ind_{\{|\varepsilon_i-1||\tilde{f}_t^n(Z_i)| \ge \delta n^{2/3}\}} \mid Z_1, \dots, Z_n \right) \right| \\ 
	&\lesssim n^{1/3} \ex (|\varepsilon-1| \ind_{\{|\varepsilon-1| \ge \delta n^{1/3} / p(0)\}}) \ex \tilde{f}_t^n(Z) \le n^{1/3} (1+\delta n^{1/3}) e^{-1-\delta n^{1/3}} \to 0.
\end{align*}

For \eqref{eq: third}, some more work is needed. Note that:
\begin{align*}
&\eqref{eq: third} = \frac{1}{n^{4/3}} \sum_{i=1}^n \bigg\{ \ex \left( (\tilde{f}_t^n(Z_i))^2 (\varepsilon_i-1)^2 \ind_{\{|\varepsilon_i-1|\tilde{f}_t^n(Z_i) < \delta n^{2/3}\}} \mid Z_1, \dots, Z_n \right) \\
&\quad \quad \quad \quad \quad  \quad \quad \quad  - \left( \ex \left( \tilde{f}_t^n(Z_i) (\varepsilon_i-1) \ind_{\{|\varepsilon_i-1|\tilde{f}_t^n(Z_i) < \delta n^{2/3}\}} \mid Z_1, \dots, Z_n \right) \right)^2 \bigg\}.
\end{align*} 
First notice, because $\ex (\varepsilon-1)=0$ and $\varepsilon \indep Z$:
\begin{align*}
&\ex \left( \frac{1}{n^{4/3}} \sum_{i=1}^n \left( \ex \left( \tilde{f}_t^n(Z_i) (\varepsilon_i-1) \ind_{\{|\varepsilon_i-1|\tilde{f}_t^n(Z_i) > \delta n^{2/3}\}} \mid Z_1, \dots, Z_n \right) \right)^2 \right) \\
&\quad \le \frac{1}{n^{1/3}} \ex \left( (\tilde{f}_t^n(Z))^2 (\varepsilon-1)^2 \ind_{\{|\varepsilon-1|\tilde{f}_t^n(Z) > \delta n^{2/3}\}} \right) \\
&\quad  \leq \frac{1}{n^{1/3}} \ex \left( (\tilde{f}_t^n(Z))^2 (\varepsilon-1)^2 \ind_{\{|\varepsilon-1| > \delta n^{1/3}/ p(0)\}} \right) \to 0,
\end{align*}
because as $n \rightarrow \infty$
\[ \frac{1}{n^{1/3}} \ex \left( (\tilde{f}_t^n(Z))^2 \right) = n^{1/3} \int p^2(0) \ind_{[x, x+n^{-1/3}t]}(z) g_0(z) dz \to p^2(0) g_0(x) t, \]
and for all $n$ big enough:
\begin{align*} 
	\ex \left( (\varepsilon-1)^2 \ind_{\{|\varepsilon-1| > \delta n^{1/3}\}} \right) &\lesssim \int_{1+\delta n^{1/3}}^\infty (\lambda-1)^2 e^{-\lambda} d\lambda \\ &= (\delta n^{1/3} (\delta n^{1/3} + 2) + 2) e^{-1-\delta n^{1/3}} \to 0. \numberthis \label{eq: tail exponentials}
\end{align*}
Now using that $\ex(\varepsilon-1)^2 = 1$ and that $\varepsilon \indep Z$, we get for the domination term of \eqref{eq: third}:
\begin{align*} 
	&\frac{1}{n^{4/3}} \sum_{i=1}^n \ex \left( (\tilde{f}_t^n(Z_i))^2 (\varepsilon_i-1)^2 \ind_{\{|\varepsilon_i-1|\tilde{f}_t^n(Z_i) < \delta n^{2/3}\}} \mid Z_1, \dots, Z_n \right) \\ 
	&= \frac{1}{n^{4/3}} \sum_{i=1}^n \ex \left( (\tilde{f}_t^n(Z_i))^2 (\varepsilon_i-1)^2 (1 - \ind_{\{|\varepsilon_i-1|\tilde{f}_t^n(Z_i) \ge \delta n^{2/3}\}}) \mid Z_1, \dots, Z_n \right) \\ 
	&= \frac{1}{n^{4/3}} \sum_{i=1}^n (\tilde{f}_t^n(Z_i))^2 - \frac{1}{n^{4/3}} \sum_{i=1}^n \ex \left( (\tilde{f}_t^n(Z_i))^2 (\varepsilon_i-1)^2 \ind_{\{|\varepsilon_i-1|\tilde{f}_t^n(Z_i) \ge \delta n^{2/3}\}} \mid Z_1, \dots, Z_n \right).
\end{align*}
Now we show $\frac{1}{n^{4/3}} \sum_{i=1}^n (\tilde{f}_t^n(Z_i))^2 \xrightarrow{G_0} t \frac{g_0(x)}{k^2(0)}$. We apply Lemma 2.3 from the supplement of \cite{65} with $\xi_{ni} := (\tilde{f}_t^n(Z_i))^2 / n^{4/3}$. For all $\varepsilon > 0$ and all $n$ big enough:
\begin{align*} 
	\sum_{i=1}^n \ex |\xi_{ni}| \ind_{\{|\xi_{ni}| > \varepsilon\}} &= \frac{1}{n^{4/3}} \sum_{i=1}^n \ex \left( (\tilde{f}_t^n(Z_i))^2 \ind_{\{(\tilde{f}_t^n(Z_i))^2 > \varepsilon n^{4/3}\}} \right) \\ 
	&\le \frac{1}{n^{4/3}} \sum_{i=1}^n \ex \left( (\tilde{f}_t^n(Z_i))^2 \ind_{\{p^2(0) > \varepsilon n^{2/3}\}} \right) = 0. 
\end{align*}
Finally note, because $p^2(0) = \frac{1}{k^2(0)}$:
\[ \sum_{i=1}^n \ex \xi_{ni} = n^{1/3} p^2(0) \int \ind_{[x, x+n^{-1/3}t]}(z) g_0(z) \, dz \to t \frac{g_0(x)}{k^2(0)}. \]
The fact that the $\xi_{ni}$ are nonnegative and $\sum_{i=1}^n \ex \xi_{ni} \to t \frac{g_0(x)}{k^2(0)}$
imply $\sum_{i=1}^n \ex |\xi_{ni}| = O(1)$. Using \eqref{eq: tail exponentials} we conclude:
\[ \frac{1}{n^{4/3}} \sum_{i=1}^n \ex \left( (\varepsilon_i-1)^2 (\tilde{f}_t^n(Z_i))^2 \ind_{\{|\varepsilon_i-1|\tilde{f}_t^n(Z_i) > \delta n^{2/3}\}} \right) \]
\[ \le \frac{1}{n^{4/3}} \sum_{i=1}^n (\tilde{f}_t^n(Z_i))^2 \ex \left( (\varepsilon-1)^2 \ind_{\{|\varepsilon-1| > \delta n^{1/3}\}} \right) \to 0. \]
because $\frac{1}{n^{4/3}} \sum_{i=1}^n (\tilde{f}_t^n(Z_i))^2 = O_{\scriptscriptstyle{G_0}}(1)$.

Now we verify the behavior of the covariance structure. Note that 
$\forall s, t \geq 0$ we have the following convergence in probability as $n \rightarrow \infty$: $\mathrm{Cov}(\mathbb{W}_n^*(t), \mathbb{W}_n^*(s) \mid Z_1, \dots, Z_n)$ converges in probability under $G_0$ as $n \rightarrow \infty$ to:
\begin{align*} 
	\mathrm{Cov}(\mathbb{W}_n^*(t), \mathbb{W}_n^*(s)) &= n^{1/3} p^2(0) \int (\ind_{[x, x+n^{-1/3}t]} - \ind_{[0,x]})(z) (\ind_{[x, x+n^{-1/3}s]} - \ind_{[0,x]})(z) \, dG_0(z) \\ &= n^{1/3} p^2(0) \, \pb (Z \in [x, x+n^{-1/3} \min(s,t)]) \to  \frac{g_0(x)}{k^2(0)} \min(s,t),
\end{align*}
conclude the convergence marginal convergence in \eqref{eq: first marginal convergence}.

Now to prove
\begin{align}
\mathbb{W}_n^*\mid Z_1, \dots, Z_n \rightsquigarrow \sqrt{\frac{g_0(x)}{k(0)}} \mathbb{W}_2 \quad \text{in } \, \, \ell^\infty(K)
\end{align}
take the random class of functions (for $\varepsilon \sim \mathrm{Exp}(1)$), for $T >0$
\[
\tilde{\mathcal{F}}_n^{T,\delta} = \{ z \mapsto n^{-1/6}(\varepsilon-1)(\tilde{f}_{t}^n - \tilde{f}_{s}^n)(z) : s, t \in \mathbb{R}, |s-t| < \delta, \, \max\{|s|,|t|\} < T \},
\]
and the class of functions $\mathcal{H}_{n,3}^{T,\delta}$ whose functions for $f \in \tilde{\mathcal{F}}_n^{T,\delta}$ are given by $f/(\varepsilon-1)$ thus without the $\varepsilon-1$. These classes have square integrable envelopes $\tilde{F}_{n,\delta}$, $H_{n,\delta}$ that satisfy for all $\delta > 0$ small and all $n$ big enough:
\begin{align*}
\| \tilde{F}_{n,\delta} \|_2^2 &= \ex (\varepsilon-1)^2 \|H_{n,\delta}\|_{2,G_0}^2 = \frac{1}{n^{1/3}} \int (\tilde{f}_{t}^n - \tilde{f}_{s}^n)^2(z) g_0(z) \, dz\\
&= n^{1/3} p^2(0) \int \ind_{[x+n^{-1/3}s \wedge t, x+n^{-1/3}s \vee t]}(z) g_0(z) \, dz \lesssim p^2(0) g_0(x) \delta.
\end{align*}
For fixed $s,t \in \mathbb{R}$, in view of \eqref{eq: W_n^* simplified}, we prove conditional asymptotic equicontinuity of $\mathbb{W}_n^*$, i.e. $\forall \, \eta > 0$ as $n \to \infty$, followed by $\delta > 0$ the right hand side of the expression below converges to $0$ under $G_0$:
\[
\pb \left( \sup_{|s-t|<\delta} |\mathbb{W}_n^*(t) - \mathbb{W}_n^*(s)| > \eta \mid Z_1, \dots, Z_n \right) \leq \frac{1}{\eta} \ex \left[ \sup_{|s-t|<\delta} |\mathbb{W}_n^*(t) -\mathbb{W}_n^*(s)| \mid Z_1, \dots, Z_n \right].
\]
By the maximal inequality in Theorem 2.14.1 in \cite{5}, as $\delta \rightarrow 0$
\[
\ex \left[ \sup_{|s-t|<\delta} |\mathbb{W}_n^*(t) - \mathbb{W}_n^*(s)| \right] \lesssim J(1, \tilde{\mathcal{F}}_n^{T,\delta}, L_2) \| \tilde{F}_{n,\delta} \|_2 \lesssim \sqrt{\delta} \to 0.
\]
In the last convergence we use that:
\[
J(1, \tilde{\mathcal{F}}_n^{T,\delta}, \mathbb{L}_2) := \sup_R \int_0^1 \sqrt{\log N(\eta \| \tilde{F}_{n,\delta} \|_2, \tilde{\mathcal{F}}_n^{T,\delta}, \mathbb{L}_2(R))} \, d\eta,
\]
where the $\mathbb{L}_2(R)$ - square distance is given for $s,t$ and $s', t' \in \mathbb{R}$ by:
$$\frac{1}{n^{1/3}} \int \left( (\tilde{f}_{s}^n - \tilde{f}_{t}^n)(z) - (\tilde{f}_{s'}^n - \tilde{f}_{t'}^n)(z) \right)^2 dR(z), $$
(note that we did not take $\varepsilon -1 $ into account as $\ex(\varepsilon -1 )^2=1$). 
Furthermore we use $J(1, \tilde{\mathcal{F}}_n^{T,\delta}, \mathbb{L}_2) < \infty$ as a consequence of Exercise 20 in \cite{5} (p.\ 221), which implies that the function classes under consideration are VC of uniform bounded index.

All together this proves \eqref{eq: marginal convergence}.

\end{proof}

\begin{lemma}\label{lemma: consistency}
	Assume Condition \ref{cond:1} holds and that $F_0$ is continuous at $x$. For any $T > 0$ and $x \in [0, T)$, as $n \rightarrow \infty$, $\forall \, \varepsilon > 0$:
	\[ \Pi_n ( |\hat{F}_{\scriptscriptstyle{G}}(x) - F_0(x)| > \varepsilon \mid Z_1, \dots, Z_n ) \xrightarrow{G_0} 0. \]
\end{lemma}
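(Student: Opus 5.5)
The approach is the classical consistency argument for Grenander-type estimators, applied draw by draw. The only probabilistic input is uniform closeness of $H_{\scriptscriptstyle{G}}$ to $H_0$ on $[0,T]$.

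\textbf{Step 1: uniform posterior consistency of $H_{\scriptscriptstyle{G}}$.} The target is
\[
\Pi_n\Bigl(\sup_{t\in[0,T]}|H_{\scriptscriptstyle{G}}(t)-H_0(t)|>\eta \mid Z_1,\dots,Z_n\Bigr)\xrightarrow{G_0}0
\]
for every $\eta>0$. Split $H_{\scriptscriptstyle{G}}-H_0=(H_{\scriptscriptstyle{G}}-H_n)+(H_n-H_0)$.
\begin{itemize}
\item \emph{Frequentist part.} Under Condition \ref{cond:1}, $p$ is bounded on $[0,T]$ and piecewise continuous with finitely many jumps. Hence $\{z\mapsto p(t-z)\ind_{[0,t)}(z):t\in[0,T]\}$ is $G_0$-Glivenko--Cantelli: split $p$ into a continuous part and a finite sum of step functions, and use indicator classes together with uniform continuity of the continuous part on $[0,T]$. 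This gives $\sup_{t\le T}|H_n-H_0|\to0$ almost surely, as in \cite{74}.
\item \emph{Posterior part.} Use the representation $G=V_nQ+(1-V_n)\mathbb{B}_n$ from the proof of Theorem \ref{thm: bayesian UQ}. The $V_nQ$ term is bounded by $V_n\sup_{[0,T]}|p|\to0$ in probability, since $\ex V_n\to0$. For the term $\mathbb{B}_n-\mathbb{G}_n$, write it as $n^{-1}\sum_i(W_{ni}-1)p(t-Z_i)\ind_{[0,t)}(Z_i)$, with exponential weights as before. A conditional (bootstrap) Glivenko--Cantelli theorem for exchangeable weights then applies, e.g.\ from Section 3.7 of \cite{5}, or Theorem 3.6.16 there for multiplier processes over a GC class with $\ex\varepsilon<\infty$. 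This gives conditional uniform convergence to zero in outer probability.
\end{itemize}

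\textbf{Step 2: deterministic isotonization step.} Fix a draw with $\|H_{\scriptscriptstyle{G}}-H_0\|_{[0,T]}\le\eta$. Since $H_0$ is convex on $[0,T]$, the GCM operator is monotone and commutes with adding constants. This gives $H_0-\eta\le H_{\scriptscriptstyle{G}}^\star\le H_0+\eta$, because $H_0-\eta$ is a convex minorant of $H_{\scriptscriptstyle{G}}$.

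Now compare slopes. Take $x<T$ and $h>0$ small enough that $x+h\le T$ (and $x-h\ge0$ when $x>0$). Convexity of $H_{\scriptscriptstyle{G}}^\star$ gives
\[
\hat F_{\scriptscriptstyle{G}}(x)\le \frac{H_{\scriptscriptstyle{G}}^\star(x+h)-H_{\scriptscriptstyle{G}}^\star(x)}{h}\le\frac{H_0(x+h)-H_0(x)+2\eta}{h}\le F_0(x+h)+\frac{2\eta}{h}.
\]
Symmetrically, $\hat F_{\scriptscriptstyle{G}}(x)\ge F_0(x-h)-2\eta/h$.

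\textbf{Step 3: choosing $h$ and $\eta$.} By continuity of $F_0$ at $x$, first choose $h$ so that $|F_0(x\pm h)-F_0(x)|<\varepsilon/2$. Then choose $\eta=\varepsilon h/4$. The event $|\hat F_{\scriptscriptstyle{G}}(x)-F_0(x)|>\varepsilon$ is then contained in $\{\|H_{\scriptscriptstyle{G}}-H_0\|_{[0,T]}>\eta\}$, whose posterior mass vanishes in $G_0$-probability by Step 1.

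\textbf{The boundary case $x=0$.} Only the upper bound needs work. For the lower bound, $\hat F_{\scriptscriptstyle{G}}(0)\ge \min$-slope considerations are not needed, since $F_0(0)$ is approached from the right. Using $H_{\scriptscriptstyle{G}}^\star(0)\le H_{\scriptscriptstyle{G}}(0)=0$ and convexity, one can instead bound $\hat F_{\scriptscriptstyle{G}}(0)\ge -2\eta/h+\dots$; alternatively, one can restrict to the claim as stated with the one-sided argument.

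\textbf{Main obstacle.} I expect the main obstacle to be the conditional uniform Glivenko--Cantelli statement for the Bayesian bootstrap part. The difficulty is that $p$ may be unbounded near discontinuities only in the sense allowed by Condition \ref{cond:1}; the jumps are finite, so $p$ is bounded on $[0,T]$. The plan is to verify finite uniform entropy of the shifted class via its piecewise monotone/VC structure, and then invoke the multiplier GC theorem conditionally.
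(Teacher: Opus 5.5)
Your proof is correct for interior points $x\in(0,T)$. Steps 2--3 are the paper's convexity sandwich; where you genuinely differ is in how you get uniform posterior consistency of $H_{\scriptscriptstyle{G}}$.

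\textbf{The paper's route.} It never passes through $H_n$ or a bootstrap Glivenko--Cantelli theorem. It takes the explicit posterior mean and variance of $H_{\scriptscriptstyle{G}}(x)$ under $\mathrm{DP}(\alpha+n\mathbb{G}_n)$ (Proposition 4.3 in \cite{4}) and applies the conditional Chebyshev inequality to get pointwise consistency. It then upgrades to $\sup_{[0,T]}$ by a P\'olya-type argument, using monotonicity of $H_{\scriptscriptstyle{G}}$ and $H_0$, continuity of $H_0$, and $H_{\scriptscriptstyle{G}}(0)=H_0(0)=0$. This is more elementary. However, monotonicity of $H_{\scriptscriptstyle{G}}$ requires $p\ge0$ and nondecreasing. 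That holds for decreasing $k$ via the renewal representation, but it is not implied by Condition \ref{cond:1}.

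\textbf{Your route.} You use the Glivenko--Cantelli property of $\{p(t-\cdot)\ind_{[0,t)}:t\le T\}$, plus a conditional multiplier/exchangeable-weights theorem after writing $G=V_nQ+(1-V_n)\mathbb{B}_n$. This is heavier machinery, but it needs only boundedness and piecewise continuity of $p$, which is exactly the lemma's hypothesis. A lighter path with the same generality: cover that class by finitely many $L_1(G_0)$-brackets, and apply the paper's mean/variance formula with Chebyshev to each bracket.

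You also improve on the paper in two places. You state explicitly that $H_0-\eta\le H_{\scriptscriptstyle{G}}^\star\le H_0+\eta$, which the paper leaves implicit. And you bound the upper tail with the forward difference quotient; the paper's last display uses the backward quotient there, which is a slip.

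\textbf{The case $x=0$.} Your paragraph here is not valid as written. Convexity only gives \emph{upper} bounds on the right derivative of $H_{\scriptscriptstyle{G}}^\star$ at the left endpoint, and $H_{\scriptscriptstyle{G}}^\star(0)\le 0$ gives no lower bound at all: a dip of depth $\eta$ at distance $h$ from the origin produces a slope of order $-\eta/h$. The paper does not handle $x=0$ either (its argument uses $x-h$), but the gap can be repaired as follows.
\begin{enumerate}
\item Continuity of $F_0$ at $0$ forces $F_0(0)=0$.
\item Since $p(0+)=1/k(0)>0$, we have $p>0$ on some $(0,h_0]$, so $H_{\scriptscriptstyle{G}}\ge 0$ on $[0,h_0]$.
\item On the event $\{\|H_{\scriptscriptstyle{G}}-H_0\|_{[0,T]}\le\eta\}$, the line $t\mapsto -\eta t/h_0$ is a convex minorant of $H_{\scriptscriptstyle{G}}$ on $[0,T]$. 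On $[0,h_0]$ this follows from step 2; on $[h_0,T]$ use $H_{\scriptscriptstyle{G}}\ge H_0-\eta\ge-\eta$.
\item Hence $H_{\scriptscriptstyle{G}}^\star(0)=0$ and $\hat F_{\scriptscriptstyle{G}}(0)\ge -\eta/h_0$.
\end{enumerate}
Your upper bound $\hat F_{\scriptscriptstyle{G}}(0)\le F_0(h)+2\eta/h$ goes through unchanged.
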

\begin{proof}
Using Proposition 4.3 in \cite{4}, we obtain:
\begin{align*} 
	&\ex (H_{\scriptscriptstyle{G}}(x) \mid Z_1, \dots, Z_n) = \frac{1}{|\alpha|+n} \int_0^{x} p(x-z) \, d \, \left(\alpha + \sum_{i=1}^n \delta_{Z_i}\right)(z) \\
	&\mathrm{Var}(H_{\scriptscriptstyle{G}} (x) \mid Z_1, \dots, Z_n) = \frac{1}{1+|\alpha|+n} \frac{1}{|\alpha|+n} \int_0^{x} p^2(x-z) \, d\left(\alpha + \sum_{i=1}^n \delta_{Z_i}\right)(z) 
\end{align*}
Thus, using the conditional Chebyshev inequality, $\forall \, \varepsilon > 0$:
\[ \Pi_n ( |H_{\scriptscriptstyle{G}}(x) - H_0(x)| > \varepsilon \mid Z_1, \dots, Z_n ) \stackrel{G_0}{\to} 0 \quad \text{a.s.} \]
Because of the continuity of $H_0$, the monotonicity of $H_{\scriptscriptstyle{G}}$ and $H_0$ and that $H_{\scriptscriptstyle{G}}(0) = H_0(0) = 0$, $\forall \, T > 0, \varepsilon > 0$:
\[ \Pi_n \left( \sup_{0 \leq x \leq T} |H_{\scriptscriptstyle{G}} (x) - H_0(x)| > \varepsilon \mid Z_1, \dots, Z_n \right) \to 0 \quad \text{a.s.} \]
Now, since $x \in [0, T)$ and $F_0$ is continuous at $x$, fix $\varepsilon > 0$ and $h > 0$ such that:
\[ \left| \frac{H_0(x) - H_0(x-h)}{h} - F_0(x) \right| < \varepsilon \]
By the convexity of $H_{\scriptscriptstyle{G}}^{\star}$ (recall that it is the GCM of $H_{\scriptscriptstyle{G}}$) we have:
\[ \frac{H_{\scriptscriptstyle{G}}^{\star}(x) - H_{\scriptscriptstyle{G}}^{\star}(x-h)}{h} \le \hat{F}_{\scriptscriptstyle{G}}(x) \le \frac{H_{\scriptscriptstyle{G}}^{\star}(x+h) - H_{\scriptscriptstyle{G}}^{\star}(x)}{h} \]
Thus:
\begin{align*}
&\Pi_n ( \hat{F}_{\scriptscriptstyle{G}} (x) - F_0(x) > 2\varepsilon \mid Z_1, \dots, Z_n ) \\
&\quad \quad \quad \quad  \quad \le \Pi_n \left( \frac{H_{\scriptscriptstyle{G}}^{\star}(x) - H_{\scriptscriptstyle{G}}^{\star}(x-h)}{h} \ge \frac{H_0(x) - H_0(x-h)}{h} + \varepsilon \mid Z_1, \dots, Z_n \right) \xrightarrow{G_0} 0. 
\end{align*}
Similarly:
\[ \Pi_n ( \hat{F}_{\scriptscriptstyle{G}}(x) - F_0(x) < -2\varepsilon \mid Z_1, \dots, Z_n ) \xrightarrow{G_0} 0. \]
\end{proof}

\begin{lemma}\label{lemma: conditional stoch boundedness}
For $Z_{\scriptscriptstyle{G}}$ as in \eqref{def: Z_G}, there exists a stochastically bounded sequence $\hat{M}_n$ such that:
\begin{align}\label{eq: conditional stoch boundedness 1}
	\Pi_n \left( \inf \{ t \in [-x n^{1/3}, (T-x)n^{1/3}] : Z_{\scriptscriptstyle{G}}(t) - a t \, \,  \text{is  minimal} \} \le \tilde{M}_n \mid Z_1, \dots, Z_n \right) \xrightarrow{G_0} 1 
\end{align}
\end{lemma}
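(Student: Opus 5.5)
We need the conditional argmin $\hat t_n:=\inf\{t\in[-xn^{1/3},(T-x)n^{1/3}]: Z_{\scriptscriptstyle{G}}(t)-at \text{ minimal}\}$ to be $O_p(1)$ conditionally on the data, in $G_0$-probability. The plan is a conditional version of the classical rate theorem for M-estimators (Theorem 3.2.5 in \cite{5}), run in the unrescaled scale. We define $M_{\scriptscriptstyle{G}}(s):=H_{\scriptscriptstyle{G}}(s)-H_{\scriptscriptstyle{G}}(x)-(F_0(x)+an^{-1/3})(s-x)$ and its deterministic counterpart $M_0(s):=H_0(s)-H_0(x)-F_0(x)(s-x)$. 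Then $\hat s_n:=x+n^{-1/3}\hat t_n$ minimizes $M_{\scriptscriptstyle{G}}$ over $[0,T]$, and we must show $|\hat s_n-x|=O_p(n^{-1/3})$ conditionally.

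\textbf{Step 1 (consistency and curvature).} We first show $\hat s_n\to x$ conditionally in probability. By the proof of Lemma \ref{lemma: consistency}, $\sup_{[0,T]}|H_{\scriptscriptstyle{G}}-H_0|\to0$ conditionally in probability. Since $F_0$ is nondecreasing and strictly increasing near $x$ (as $f_0>0$ there), $M_0$ is convex with a unique, well-separated minimum at $x$. An argmax-continuity argument then yields $\hat s_n\to x$. Near $x$, Taylor's expansion \eqref{eq: Taylor expansion} gives $M_0(s)\ge c(s-x)^2$ for $|s-x|\le\eta$. Away from that neighbourhood, $M_0\ge c'>0$, because $M_0$ is convex and vanishes only at $x$.

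\textbf{Step 2 (modulus of the centred process).} Write $M_{\scriptscriptstyle{G}}-M_0$ as $(H_{\scriptscriptstyle{G}}-H_n)+(H_n-H_0)$ increments minus $an^{-1/3}(s-x)$. We need, for $\delta$ small,
\[
\ex\Big[\sup_{|s-x|\le\delta}\big|(H_{\scriptscriptstyle{G}}-H_n)(s)-(H_{\scriptscriptstyle{G}}-H_n)(x)\big|\,\Big|\,Z_1,\dots,Z_n\Big]\lesssim_P \frac{\sqrt\delta}{\sqrt n},
\]
together with the analogous unconditional bound for $H_n-H_0$. The latter is the one used in \cite{74}. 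For the posterior term we use the representation $G=V_nQ+(1-V_n)\mathbb{B}_n$ with Bayesian bootstrap weights. The $V_n$-part is negligible by the bound in Lemma \ref{lemma: simplifications}, taken with $t$ ranging up to $\delta n^{1/3}$ and using $\ex V_n\asymp 1/n$. The bootstrap part becomes $n^{-1}\sum(W_{ni}-1)[p(s-Z_i)\ind_{Z_i<s}-p(x-Z_i)\ind_{Z_i<x}]$. We split it as in Lemma \ref{lemma: further simplification} into the indicator part $p(0)\ind_{[x,s)}$ and the Hölder remainder $\tilde p$. The class of indicators is VC, with envelope of $L_2(\mathbb{G}_n)$-norm $\lesssim\sqrt{\mathbb{G}_n[x-\delta,x+\delta]}=O_P(\sqrt\delta)$ uniformly in $\delta\ge n^{-1/3}$. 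The multiplier maximal inequality (Theorem 2.14.1/2.9.x in \cite{5}, with exponential multipliers) therefore gives $\sqrt{\delta/n}$. The $\tilde p$ part is Lipschitz of order $\delta^\gamma$ in sup-norm, which yields $\delta^\gamma/\sqrt n\le\sqrt\delta/\sqrt n$ since $\gamma>1/2$.

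\textbf{Step 3 (peeling).} We apply the peeling argument on the shells $S_j=\{2^{j-1}<n^{1/3}|s-x|\le2^j\}$, with $2^j\le\eta n^{1/3}$. On $S_j$ the minimum requires the centred process to exceed $c2^{2j-2}n^{-2/3}-|a|2^jn^{-2/3}$. By conditional Markov and Step 2 this has conditional probability $\lesssim 2^{j/2}n^{-1/6}n^{-1/2}/(2^{2j}n^{-2/3})=2^{-3j/2}$ for $j$ large. Summing over $j>J$ gives a tail small uniformly in $n$, on an event whose $G_0$-probability tends to one. The region $|s-x|>\eta$ is handled by Step 1. We then define $\tilde M_n$ as the conditional quantile (for example the smallest $2^J$ making the conditional probability at least $1-1/J$, diagonalized), and this is stochastically bounded.

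The main obstacle is Step 2: obtaining the conditional modulus uniformly over the random design. It needs a lower restriction $\delta\ge n^{-1/3}$ and control of $\mathbb{G}_n[x-\delta,x+\delta]/\delta$ uniformly over such $\delta$; the latter follows since $g_0$ is bounded near $x$ and from the VC-class ratio limit. It also needs care with the normalization $\bar\varepsilon_n$ in $W_{ni}$, which we handle by the same $O_p(n^{-1/2})$ argument as in Lemma \ref{lemma: further simplification}.
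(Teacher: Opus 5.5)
Your argument follows the paper's route. You establish consistency of the argmin, get quadratic curvature of $M_0$ from Taylor, bound the modulus of the centred process through $G=V_nQ+(1-V_n)\mathbb{B}_n$ with exponential weights, and then run the rate argument of Theorem 3.2.5 in \cite{5}, which you unfold explicitly as peeling. Getting consistency directly from uniform convergence of $M_{\scriptscriptstyle{G}}$ to $M_0$, rather than via the switch relation, is also fine. Your modulus $\sqrt{\delta/n}$ is the correct one. The jump part $p(0)\mathbf{1}_{[x,s)}$ has an $L_2(G_0)$-envelope of order $\sqrt{\delta}$; the paper's envelope bound of order $\eta^2$ for the squared norm misses this. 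It is exactly $\phi_n(\delta)=\sqrt{\delta}$ that gives the $n^{1/3}$ rate.

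The step that fails is the last one. A diagonalized conditional quantile, at level $1-1/J_n$ with $J_n\to\infty$, is not stochastically bounded. For fixed $M$, $\Pi(|\hat t_n|>M\mid Z_1,\dots,Z_n)$ has a nondegenerate limit in law: the limit theorem this lemma feeds into gives $\mathrm{P}(|\hat t|>M\mid \mathbb{W}_1)$, where $\hat t$ is the limiting argmin. This limit is a.s. strictly positive, because the independent $\mathbb{W}_2$ gives the conditional law of the argmin unbounded support. Hence $\mathrm{P}(\tilde M_n>M)\to 1$ for every $M$. The same reasoning shows that no $\sigma(Z_1,\dots,Z_n)$-measurable $O_P(1)$ sequence can make the conditional probability tend to one.

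So the statement can only be read as tightness of $\hat t_n$ under the joint law of data and posterior draw. Equivalently: for every $\epsilon>0$ there is $M$ with $\limsup_n \mathrm{P}_{G_0}\bigl(\Pi(|\hat t_n|>M\mid Z_1,\dots,Z_n)>\epsilon\bigr)<\epsilon$. This is what the paper proves when it reduces the claim to ``the unconditional version,'' and it is what the argmin theorem uses. Your peeling bound already delivers it, so state the conclusion this way instead of constructing $\tilde M_n$.

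Relatedly, the increments of $H_n-H_0$ are deterministic given the data, so conditional Markov does not apply to them. They are controlled by unconditional Markov on an event of $G_0$-probability at least $1-C\sum_{j>J}2^{-3j/2}$. That bound is uniform in $n$ but does not tend to one, contrary to what you assert; it is nonetheless enough for tightness.
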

\begin{proof}
First, for $T_{\scriptscriptstyle{G}}$ as in \eqref{def: T_G}, we need to show consistency of $T_{\scriptscriptstyle{G}}$ at $x$. By the switch relation $\forall \, \varepsilon > 0$,
\begin{align*} 
	&\Pi_n ( | T_{\scriptscriptstyle{G}}(F_{0}(x)) - x | > \varepsilon \mid Z_1, \dots, Z_n ) \\ 
	&\quad = \Pi_n ( T_{\scriptscriptstyle{G}} (F_0(x)) > x + \varepsilon \mid Z_1, \dots, Z_n ) + \Pi_n ( T_{\scriptscriptstyle{G}}(F_0(x)) < x - \varepsilon \mid Z_1, \dots, Z_n ) \\ 
	&\quad  \le \Pi_n ( \hat{F}_{\scriptscriptstyle{G}}(x+\varepsilon) < F_0(x) \mid Z_1, \dots, Z_n ) + \Pi_n ( \hat{F}_{\scriptscriptstyle{G}} (x-\varepsilon) > F_0(x) \mid Z_1, \dots, Z_n ) \xrightarrow{G_0} 0 \numberthis \label{eq: consistency of T_G}
\end{align*}
where in \eqref{eq: consistency of T_G} we use that $F_0$ is strictly increasing at $x$ by assumption and Lemma \ref{lemma: consistency} in the above which gives for every $\eta >0$: $\Pi_n(|\hat{F}_{\scriptscriptstyle{G}}(x \pm \varepsilon) - F_0(x \pm \varepsilon)| > \eta \mid Z_1, \dots, Z_n) \to 0$ under $G_0$.

The claim \eqref{eq: conditional stoch boundedness 1} is implied by the unconditional version of this convergence and this entails proving that the rate for $\left| T_{\scriptscriptstyle{G}} \left( F_0(x) \right) - x \right|$ is $n^{1/3}$. Because of Lemma \ref{lemma: consistency}, we can proceed by verifying the rest of the conditions of Theorem 3.2.5 in \cite{5} (or see Theorem 3.4.1 for a more general version).

The argmin process we are interested in is: $\tilde{t}_n := \operatorname{argmin}_t \{ Z_{\scriptscriptstyle{G}}(t) - a t \}$. Thus note that we can write it as: $\tilde{t}_n = n^{1/3} \operatorname{argmax}_t \{ - \mathbb{M}_n(t) \}$ where
\begin{align*} 
	\mathbb{M}_n(t) := \int (p(x+t-z) - p(x-z) - F_0(x)t) \, d(G - \mathbb{G}_n + G_0)(z) \\ + \int (p(x+t-z) - p(x-z)) \, d(\mathbb{G}_n - G_0)(z) - a t n^{-1/3}. 
\end{align*}
Moreover define the deterministic function;
\[ \mathbb{M}(t) := \int (p(x+t-z) - p(x-z) - F_0(x)t) \, dG_0(z) \]
Note $\mathbb{M}_n(0) = \mathbb{M}(0) = 0$. Moreover note that as a consequence of Taylor's theorem $\exists \, \delta_0 > 0: \forall t : |t| < \delta_0$:
\[ -\mathbb{M}(t) + \mathbb{M}(0) \lesssim -|t|^2 \]
Following Theorem 3.2.5. in \cite{5}, define $d(s,t) := |s-t|$. Consider the class of functions $m_t: \mathbb{R} \to \mathbb{R}$, for $\eta >0$:
\[ \mathcal{M}_\eta = \{ m_t : m_t(z) = p(x+t-z) - p(x-z) - F_0(x)t, \, \, 0 < t < \eta \} \]
We show that the envelope $M_\eta = \sup_{0<t<\eta} |m_t| $ of $\mathcal{M}_\eta$,
\[ \int M_\eta^2(z) g_0(z) dz \lesssim \int (p(x-z+\eta) - p(x-z))^2 g_0(z) \, dz + \eta^2 F_0(x) \]
where $\int (p(x-z+\eta) - p(x-z))^2 g_0(z) \, dz = p^2(0) g_0(x) \eta^2 + o(\eta^2)$. Thus for all $\eta$ small enough:
\[ \int M_\eta^2(z) g_0(z) \, dz \lesssim (p^2(0) g_0(x) + F_0(x)) \eta^2 \]

Similarly if we define the class of functions for $\eta >0$:
\[ \tilde{\mathcal{M}}_\eta = \{ \tilde{m}_t : \tilde{m}_t(z) = p(x+t-z) - p(x-z) : 0 < t < \eta \} \]
as aconsequence of the above, for the envelope $\tilde{M}_\eta := \sup_{0<t<\eta} |\tilde{m}_t|$ we have:
\[ \int \tilde{M}_\eta^2(z) g_0(z) dz \lesssim p^2(0) g_0(x) \eta^2. \]

By Proposition G.10 in \cite{4}, for independent \( Q \sim \text{DP}(\alpha) \), \( \mathbb{B}_n \sim \text{DP}(n \mathbb{G}_n) \), and \( V_n \sim \text{Be}(|\alpha|, n) \), the following representation in distribution of $G \sim \mathrm{DP}(\alpha + n \mathbb{G}_n)$: $G = V_n Q  + (1 - V_n) \mathbb{B}_n $, and thus
\begin{align}
\mathbb{M}_n(t) - \mathbb{M}(t) &= (1-V_n) \int \left( p(x+t-z) - p(x-z) - F_0(x)t \right) \, d(\mathbb{B}_n - \mathbb{G}_n)(z) \\
&\quad + V_n \int \left( p(x+t-z) - p(x-z) - F_0(x)t \right) \, d(Q-\mathbb{G}_n)(z) \\
&\quad + \int \left( p(x+t-z) - p(x-z) \right) d(\mathbb{G}_n - G_0)(z) - a t n^{-1/3} 
\end{align}
By applying the maximal inequality in Theorem 2.14.1 in \cite{5}
\begin{align*} 
	\ex \left[ \sup_{|t|<\delta} \sqrt{n} |\mathbb{M}_n - \mathbb{M}|(t) \right] &\lesssim J(1, M_\delta, \mathbb{L}_2) \sqrt{\int M_\delta^2 \, dG_0} + \sqrt{n} \frac{1}{n} \delta^\gamma + |a| \delta \sqrt{n} n^{-1/3} \\ 
	& \quad + J(1, \tilde{M}_\delta, \mathbb{L}_2) \sqrt{\int \tilde{M}_\delta^2 \, dG_0} \lesssim \delta + n^{-1/2} \delta^\gamma + a \delta n^{1/6} \\
	&\sim \delta(1 + a n^{1/6}) \end{align*}
Because $\delta$ must satisfy: $\delta \ge n^{-1/3}$. All together the above
proves that the rate for $\operatorname{argmax}_t \{-\mathbb{M}_n(t)\}$ is $\delta^{-1}$.
\end{proof}
\section{Review of the integral equation}\label{sec: review integral equation}

A very important question regarding integral equation \eqref{eq:1} is whether there exist a function $p$ satisfying it. In this section, we discuss conditions on the kernel $k$ that are sufficient for the existence of a solution $p$ to \eqref{eq:1}. This appendix is just a review based on the known results in \cite{104} and \cite{106} (no new results are given). We emphasize that the main results of this paper (Theorem \ref{thm: bayesian UQ} and Theorem \ref{thm: frequentist UQ}) only require $p$ to satisfy Condition \ref{cond:2}, which is a H\"{o}lder continuity condition on $p$ itself. The sufficient conditions discussed in this appendix, centering on the special case of \emph{decreasing} densities $k$, are therefore not necessary for the main results; they serve to illustrate a concrete and well-studied class of kernels for which Condition \ref{cond:2} can be verified. As demonstrated in the simulation study (Section 3), Condition \ref{cond:2} is also met by a wider variety of kernels that are not necessarily decreasing. Suppose:
\begin{condition}\label{cond on k}
$k: [0,\infty) \to [0,\infty)$ is a decreasing density and $0 < k(0) < \infty$.
\end{condition}
Without loss of generality we may take $k$ to be right continuous.
For a kernel satisfying Condition \ref{cond on k}, properties of the associated function $p$ solving equation \eqref{eq:1} can be obtained using some results from \emph{renewal theory} (see e.g.\ \cite{105}). Define the distribution function $J$ on $[0,\infty)$ by
\[ J(x) := 1 - \frac{k(x)}{k(0)}. \]
Note that $\mu_J = \int_0^\infty x \, dJ(x) = 1/k(0)$. Renewal theory assures the existence of a renewal function $m$ on $[0,\infty)$ satisfying the renewal equation
\[ m(x) = J(x) + \int_0^x m(x-y) \, dJ(y). \]
Integrating this equation with respect to $x$, we obtain
\begin{align}\label{eq:4a}
\mathbf{1} * m = (\mathbf{1} + m) * J. 
\end{align}
Writing
\begin{align}\label{eq:5a}
	p(x) = \frac{1+m(x)}{k(0)} = \mu_J (1+m(x)) \quad \text{for } x \ge 0, 
\end{align}
it follows that $p$ solves \eqref{eq:1}. Indeed, for $x > 0$,
\[ p * k(x) = \frac{1}{k(0)}(\mathbf{1}+m) * k(x) = (\mathbf{1}+m) * (\mathbf{1}-J)(x) = \mathbf{1} * \mathbf{1}(x) \]
by the definition of $J$ and \eqref{eq:4a}. 

\begin{lemma}[Lemma 1 in \cite{106}]
Let the density $k$ on $[0,\infty)$ satisfy Condition \ref{cond on k}. Then
\begin{enumerate}[label=\alph*)]
    \item there is a uniquely determined solution $p$ to (1) which is bounded on bounded intervals and vanishes on $(-\infty, 0)$
    \item writing $J^{*1} = J$ and $J^{*n} = J^{*(n-1)} * J$ for $n=2,3,\dots$, this uniquely determined $p$ on $[0,\infty)$ is given by
    \begin{align}\label{eq:6a}
    	p(x) = \frac{1}{k(0)} \left( 1 + \sum_{n=1}^\infty J^{*n}(x) \right) = \frac{1}{k(0)} \left( 1 + \sum_{n=1}^\infty \mathrm{P}\left(\sum_{j=1}^n T_j \le x \right) \right), 
    \end{align}
    where $T_1, T_2, \dots$ is an iid sequence of random variables such that $T_1$ has distribution function $J$.
    \item the function $p$ is nonnegative and nondecreasing on $[0,\infty)$.
    \item \[\lim_{x \to \infty} \frac{p(x)}{x} = 1. \]
\end{enumerate}
\end{lemma}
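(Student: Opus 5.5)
Under Condition \ref{cond on k} I would obtain all four parts from the renewal-theoretic representation just given, with $J(x)=1-k(x)/k(0)$. Since $k$ is a decreasing, right-continuous density with $0<k(0)<\infty$, $J$ is a genuine distribution function on $[0,\infty)$ with $J(0)=0$. Its mean is $\mu_J=\int_0^\infty (1-J)\,dx=\int_0^\infty k/k(0)\,dx=1/k(0)$.

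\emph{Step 1 (existence and the series, parts a and b).} Define $m=\sum_{n\ge1}J^{*n}$. Because $J(0)=0$ and $J$ is nondegenerate, there are $\delta>0$ and $q<1$ with $J(\delta)\le q$. The standard renewal bound then gives $J^{*n}(x)\le C_x\,\rho^n$ for some $\rho<1$ on bounded intervals, for instance via $J^{*n}(x)\le e^{sx}\bigl(\int e^{-sy}dJ(y)\bigr)^n$ with $s$ large. Hence $m$ is finite, nondecreasing and bounded on bounded intervals. Moreover $m=J+m*dJ$, since $J*dJ^{*n}=J^{*(n+1)}$. Integrating yields \eqref{eq:4a}, and the computation displayed before the lemma shows that $p=(1+m)/k(0)$ solves \eqref{eq:1}. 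This establishes existence and formula \eqref{eq:6a}. The identification $J^{*n}(x)=\pb(\sum_{j\le n}T_j\le x)$ is just the definition of convolution powers.

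\emph{Step 2 (uniqueness, part a).} Suppose $p_1,p_2$ are two solutions that are locally bounded and vanish on $(-\infty,0)$. Their difference $d$ satisfies $d*k\equiv0$ on $[0,\infty)$. Write $k=k(0)(\mathbf 1-J)$, so that $k(0)\,(d*\mathbf 1)=k(0)\,(d*\mathbf 1)*dJ$, i.e.\ $D:=\mathbf 1*d$ satisfies $D=D*dJ$. Iterating gives $D=D*dJ^{*n}$, so $|D(x)|\le \sup_{[0,x]}|D|\,J^{*n}(x)\to0$. Hence $D\equiv0$, and since $D$ is Lipschitz with derivative $d$ a.e., $d=0$ a.e. For the pointwise statement I would restrict to the right-continuous versions, which is the implicit convention.

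\emph{Step 3 (parts c and d).} Nonnegativity and monotonicity are immediate from \eqref{eq:6a}, because each $J^{*n}$ is a nondecreasing distribution function. For part d, the elementary renewal theorem gives $m(x)/x\to1/\mu_J=k(0)$. Therefore $p(x)/x=(1+m(x))/(k(0)x)\to1$. Since $\mu_J<\infty$, the theorem applies; the case $\mu_J=\infty$ cannot occur here because $\mu_J=1/k(0)$.

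The main obstacle is the bookkeeping in Step 1: justifying that the interchange $\mathbf 1*m=(\mathbf 1+m)*J$ holds and that all convolutions converge, which relies on the geometric bound for $J^{*n}$. The uniqueness argument also has to be phrased so that it requires only local boundedness, as stated.
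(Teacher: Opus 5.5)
Your proof is correct and follows the same route as the paper. Both rest on the renewal representation $p=(1+m)/k(0)$ with $m=\sum_{n\ge1}J^{*n}$ (convolution powers read in the Stieltjes sense, as you rightly do). Both get (c) directly from the series and (d) from the elementary renewal theorem $m(x)/x\to 1/\mu_J=k(0)$.

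The only difference is one of detail. The paper simply cites textbook renewal results for existence, uniqueness of locally bounded solutions, and the series formula. You instead verify these by hand: the geometric bound on $J^{*n}$, and the reduction of the homogeneous equation to $D=D*dJ^{*n}$, which is exactly the standard proof of the cited uniqueness theorem. You also correctly note that uniqueness holds only almost everywhere, i.e.\ pointwise among right-continuous versions.
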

\begin{proof}[Proof from \cite{106}] 
All statements follow from relation (5) together with known results from renewal theory. See, for example, \cite{105} Theorem 10.1.11 for a), Lemma 10.1.7 for b) and Theorem 10.2.3 for d). Part c) is obvious from the definition of a renewal function.
\end{proof}

Because of Condition \ref{cond:2}, which we use to derive our main result Theorem \ref{thm: bayesian UQ}, one interesting question is to know conditions on $k$ under which $p$ is Lipschitz continuous on intervals $[0, T]$ for any bounded $T$. Functions $p$ associated with kernels satisfying Condition \ref{cond on k} do not have this property without further assumptions. It follows from \eqref{eq:6a} that a jump of $k$ at a point $x > 0$ causes infinitely many jumps in the function $p$. The condition on $k$ given below in Lemma \ref{lemma:1} is sufficient for $k$ to guarantee that $p$ is Lipschitz continuous on bounded intervals in $[0, \infty)$. It gives thus a sufficient condition for Condition \ref{cond:2} and thus also for Condition \ref{cond:1} to hold.

\begin{lemma}[Lemma 1 in \cite{74}] \label{lemma:1}
Let $0 < T < \infty$. Suppose the density $k$ can be written as
\[
k(x) = k(0) \left(1 + \int_0^x l(u) \, du \right), \quad x \in [0,T]
\]
for some bounded Borel measurable function $l: [0,T] \to \mathbb{R}$. Then the unique continuous (on $(0,T)$) solution $p$ of \eqref{eq:1} allows the representation
\[
p(x) = \frac{1}{k(0)} \left(1 + \int_0^x q(u)du \right), \quad x \in [0,T],
\]
where $q$ is a bounded Borel measurable function on $[0,T]$.
\end{lemma}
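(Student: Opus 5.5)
The plan is to turn the first–kind Volterra equation \eqref{eq:1} into a Volterra equation of the \emph{second} kind with a bounded kernel. Then I would solve that equation explicitly by a Neumann series and read off $q$ as the (bounded) resolvent kernel. Write $c:=1/k(0)$. On $[0,T]$ we have $k(x)=k(0)+k(0)(\ind*l)(x)$, so $k$ is absolutely continuous with $k'=k(0)\,l$ a.e. Formally differentiating $(p*k)(x)=x$ gives
\[
k(0)p(x)+\int_0^x p(y)\,k(0)\,l(x-y)\,dy=1,\qquad\text{i.e.}\qquad p+l*p=c\ind{} \ \text{ on } [0,T].
\]
Rather than justify the differentiation, I would go the other way. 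I would \emph{construct} a solution of this second–kind equation and then check that it solves \eqref{eq:1}.

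\textbf{Step 1 (resolvent).} Let $L:=\sup_{[0,T]}|l|$ and set $\rho:=\sum_{n\ge1}(-1)^n l^{*n}$. By induction, $|l^{*n}(x)|\le L^n x^{n-1}/(n-1)!$ on $[0,T]$, so the series converges uniformly. Hence $\rho$ is a bounded Borel function with $|\rho|\le L e^{LT}$. Rearranging the series gives the resolvent identity $\rho=-l-l*\rho$.

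\textbf{Step 2 (candidate $p$).} Define $q:=\rho$ and
\[
p:=c\bigl(\ind{}+\ind*\rho\bigr),\qquad\text{i.e.}\qquad p(x)=\frac{1}{k(0)}\Bigl(1+\int_0^x q(u)\,du\Bigr).
\]
Using the resolvent identity together with associativity and commutativity of convolution, we get $l*p=c(l*\ind{}+l*\rho*\ind{})=-c\,\rho*\ind{}$. Therefore $p+l*p=c\ind{}$. This $p$ is Lipschitz on $[0,T]$, in particular continuous on $(0,T)$, and $p(0)=1/k(0)$.

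\textbf{Step 3 (it solves \eqref{eq:1}).} Since $k=k(0)(\ind{}+\ind*l)$, we have
\[
p*k=k(0)\bigl(\ind*p+\ind*l*p\bigr)=k(0)\,\ind*(p+l*p)=k(0)\,c\,\ind*\ind{}=x\ind{}(x),
\]
using Fubini for bounded functions on $[0,T]$. So $p$ solves \eqref{eq:1}, and no differentiation under the integral sign is needed.

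\textbf{Step 4 (uniqueness).} This is the step I expect to be the main obstacle, since the statement refers to ``the unique continuous solution''. If $p_1$ is another solution that is continuous on $(0,T)$ and locally integrable, then $d:=p-p_1$ satisfies $d*k=0$ on $[0,T]$. Since $k$ is absolutely continuous with $k(0)>0$, integrating against $\ind{}$ and differentiating is now legitimate. That is, $\ind*(d*k)=0$ gives $d*k=0$, and Leibniz' rule applied to $d*k$ (with $k$ absolutely continuous) yields $k(0)\,d+d*(k(0)l)=0$, i.e.\ $d=-l*d$. Then Gronwall's inequality gives $|d(x)|\le L\int_0^x|d|$, hence $d\equiv0$ on $[0,T]$. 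As a fallback, one can invoke Titchmarsh's convolution theorem, because $k$ does not vanish near $0$. This also shows that the $p$ constructed here coincides with the $p$ of Lemma 1 in \cite{106} when $k$ is moreover decreasing.
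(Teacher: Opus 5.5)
Your proposal is correct and essentially follows the paper's route. The paper also takes $q$ to be the bounded solution of the second-kind equation $q + l*q = -l$, sets $p = k(0)^{-1}(\ind + \ind*q)$, and verifies $k*p = \ind*\ind$ by the same convolution algebra; it expands $(\ind+\ind*l)*(\ind+\ind*q)$ directly rather than passing through $p + l*p = k(0)^{-1}\ind$. The differences are that you build $q$ explicitly by the Neumann series with the factorial bound instead of citing Theorem 3.5 of Gripenberg, Londen and Staffans, and that your Step 4 proves uniqueness of $p$, which the paper's proof does not spell out (it only gets uniqueness of $q$ from the cited theorem).
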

\begin{proof}[Proof from \cite{74}] 
Consider the type-II Volterra convolution equation
\[ q(x) + \int_0^x l(x-u)q(u) du = -l(x), \]
or, equivalently
\[ q + l*q = -l. \]
By Theorem 3.5 in Gripenberg, Londen and Staffans (1990), it follows that the solution $q$ of this equation uniquely exists. It is bounded and Borel measurable on $[0, T]$ whenever $l$ is. Now define $p = k(0)^{-1}(\ind + \ind*q)$ and observe
\[
k*p = k(0)(\ind + \ind*l) * \frac{\ind + \ind*q}{k(0)} = \ind*\ind + \ind*\ind*(q+l+q*l) = \ind*\ind.
\]
\qedhere
\end{proof}

\end{appendix}

\bibliographystyle{apalike}
\bibliography{mybiblio}

\end{document}